\newtheorem{theorem}{Theorem}
\newtheorem{lemma}[theorem]{Lemma}
\newtheorem{proposition}{Proposition}
\newtheorem{definition}{Definition}
\def\checkmark{\tikz\fill[scale=0.4](0,.35) -- (.25,0) -- (1,.7) -- (.25,.15) -- cycle;} 
\newcommand{\xmark}{\ding{55}}%
\newcommand{\ket}[1]{| #1 \rangle}
\newcommand{\proj}[1]{| #1 \rangle\!\langle #1 |}
\renewcommand{\top}{^\text{\tiny T}}
\newcommand{\C}{\mathcal{C}}
\newcommand{\F}{\mathcal{F}}
\newcommand{\I}{\mathcal{I}}
\newcommand{\X}{\mathcal{X}}
\newcommand{\Y}{\mathcal{Y}}
\newcommand{\N}{\mathcal{N}}
\newcommand{\NSspace}{\ensuremath{\overline{\mathcal{N}}}} 
\newcommand{\Q}{\mathcal{Q}}
\renewcommand{\S}{\mathcal{S}}
\newcommand{\AQ}{\tilde{\Q}}
\newcommand{\NSset}{\ensuremath{\mathcal{N}}} 
\newcommand{\Pset}{\ensuremath{\mathcal{P}}} 
\newcommand{\Qset}{\mathcal{Q}}
\newcommand{\CHSH}{\text{\tiny CHSH}}
\newcommand{\Nt}{N_\text{\tiny trials}}
\newcommand{\DKL}{D_\text{\tiny KL}}
\newcommand{\vecP}{\vec{P}}
\newcommand{\vecPQ}{\vecP_\Q}
\newcommand{\vecf}{\vec{f}}
\newcommand{\vecPpi}{\vec{P}_{\Pi}(\vecf)}
\newcommand{\vecPnqa}{\vec{P}_{{\rm LS}}({\vecf})}
\newcommand{\vecPkl}{\vec{P}_{\text{\tiny ML}}({\vecf})}
\newcommand{\vecPreg}{\vec{P}_\text{\tiny Reg}}
\newcommand{\vecPregM}[1]{\vec{P}_{\mbox{\tiny #1}}}
\newcommand{\tr}{\ensuremath{\operatorname{tr}}}
\newcommand{\one}{\mathbb{I}}
\DeclareMathOperator*{\argmin}{argmin}
\DeclareMathOperator*{\argmax}{argmax}
\renewcommand{\L}{\mathcal{L}}
\begin{document}

\title{Device-independent point estimation from finite data\\ and its application to device-independent property estimation}

\author{Pei-Sheng Lin}%
\affiliation{Department of Physics, National Cheng Kung University, Tainan 701, Taiwan}
\author{Denis Rosset}
\affiliation{Department of Physics, National Cheng Kung University, Tainan 701, Taiwan}
\author{Yanbao Zhang}
\affiliation{NTT Basic Research Laboratories, NTT Corporation, 3-1 Morinosato-Wakamiya, Atsugi, Kanagawa 243-0198, Japan}
\affiliation{Institute for Quantum Computing, University of Waterloo, Waterloo, Ontario N2L 3G1, Canada}
\affiliation{Department of Physics and Astronomy, University of Waterloo, Waterloo, Ontario N2L 3G1, Canada}
\author{Jean-Daniel Bancal}
\affiliation{Department of Physics, University of Basel, Klingelbergstrasse 82, 4056 Basel, Switzerland}
\author{Yeong-Cherng Liang}
\email{ycliang@mail.ncku.edu.tw}
\affiliation{Department of Physics, National Cheng Kung University, Tainan 701, Taiwan}

\begin{abstract}
The device-independent approach to physics is one where conclusions are drawn directly from the observed correlations between measurement outcomes. In quantum information, this approach allows one to make strong statements about the properties of the underlying systems or devices solely via the observation of Bell-inequality-violating correlations. However, since one can only perform a {\em finite number} of experimental trials, statistical fluctuations necessarily accompany any estimation of these correlations. Consequently, an important gap remains between the many theoretical tools developed for the asymptotic scenario and the experimentally obtained raw data. In particular, a physical and concurrently practical way to estimate the underlying quantum distribution has so far remained elusive. Here, we show that the natural analogs of the maximum-likelihood estimation technique and the least-square-error estimation technique in the device-independent context result in point estimates of the true distribution that are physical, unique, computationally tractable and consistent. They thus serve as sound algorithmic tools allowing one to bridge the aforementioned gap.  As an application, we demonstrate how such estimates of the underlying quantum distribution can be used to provide, in certain cases, trustworthy estimates of  the amount of entanglement present in the measured system. In stark contrast to existing approaches to device-independent parameter estimations, our estimation does not require the prior knowledge of {\em any} Bell inequality tailored for the specific property and the specific distribution of interest. 
\end{abstract}

\date{\today}

\maketitle
\section{Introduction}
The proper analysis of empirical data is an indispensable part in the development of both science and technologies. In quantum information, for instance, the careful preparation followed by the proper characterization of quantum systems (which includes estimating reliably the prepared quantum state or the confirmation that it possesses certain desired properties) is often the first step to many quantum information processing protocols. 

In practice, however, the execution of this preliminary task is far from trivial. For example, systematic uncertainties arising from various imperfections in the setup may compromise the reliability of the estimate~\cite{Rosset:2012,Moroder2013b,vanEnk:2013}. Moreover, unavoidable statistical fluctuations  result in situations where the ideal, theoretical description become inapplicable. Hence, quantum state estimation~\cite{Hradil1997} using real data is a daunting task~\cite{Blume-Kohout2010,arXiv:1202.5270,Christandl2012, Sugiyama2013,Shang2013,Faist2016} where there remains an ongoing debate on the preferred approach (see, e.g.,~\cite{arXiv:1405.5350:brief, Schwemmer2015} and references therein).

Interestingly, the first of these problems can be circumvented, to some extent, by the so-called  {\em device-independent} approach~\cite{Scarani2012,Brunner:RMP}. There, the nature of the devices employed is deduced directly from the measurement statistics~\cite{Mayers1998,Mayers2004}, without relying on any assumption about the devices' detailed functioning or the associated Hilbert space dimension. Consequently, robust characterizations of quantum systems and instruments are now in principle possible with minimal assumptions. Likewise, the distribution of shared secret keys~\cite{Ekert1991,Acin2007,Vazirani2014} and the generation of random bits---secured by the laws of physics~\cite{ColbeckPhD,Pironio:2010aa,Colbeck2011}---are now possibilities at our disposal.

Crucially, in order to make nontrivial statements from the empirical data, the latter have to be Bell-inequality~\cite{Bell1964} violating, which cannot arise from local measurements on a separable quantum state~\cite{Werner1989}. Indeed, the extent of the observed Bell-inequality violation can be used to provide an estimate, e.g., of the amount of shared entanglement~\cite{Moroder2013,Toth2015}, or even the incompatibility between the measurements~\cite{Chen2016,Cavalcanti2016} employed. Nonetheless, as with the case where the measurement devices are fully characterized, the underlying distribution---which serves as the analog of a quantum {\em state} in this black-box setting---contains all the available information and thus generally provides a much better estimate of the system's properties~\cite{Zhang2011,Zhang2013,Bancal2014,Nieto2014}.

Indeed, various theoretical tools taking into account the full quantum distribution have been developed for device-independent characterizations: from the nature of the (multipartite) entanglement~\cite{Bancal2011,LiangPRL2015,Baccari2017} present to their quantification~\cite{Moroder2013,Toth2015}, from the steerability~\cite{Wiseman2007} of the underlying state to the incompatibility of the measurements employed~\cite{Chen2016,Cavalcanti2016}, and from the minimal compatible Hilbert space dimension~\cite{Brunner2008,NavascuesPRX,NavascuesPRL2015} to the self-testing~\cite{Mayers1998,Mayers2004} of the quantum apparatus~\cite{Yang2014,BancalPRA2015}. Stemming from the algorithmic characterization of the set of quantum distributions due to Navascu\'es-Pironio-Ac\'in (NPA)~\cite{Navascues2007,Navascues2008a}, they share the common assumption that the estimated distributions satisfy the physically motivated conditions of nonsignaling~\cite{Popescu1994,Barrett2005}. 

However, raw distributions estimated from the relative frequencies of experimental outcomes---due to statistical fluctuations---generically do not satisfy these conditions. As such, {\em none} of the aforementioned tools can be directly applied to experimentally observed statistics. In other words, while the device-independent approach offers an elegant solution to overcome the problem of mistrusting the measurement devices, there remains an important gap between the  theoretical tools developed for such purposes and the actual data available from any Bell experiment.

For the very {\em specific} problem of device-independent randomness certification and quantum key distribution, techniques based on hypothesis testing have been shown, respectively, in~\cite{Pironio:2010aa,Pironio2013,arXiv:1611.00352:brief,arXiv:1702.05178:brief,Knill2017} and  in~\cite{Dupuis:1607.01796:brief,Rotem2017} to be applicable even in the presence of finite statistics. These approaches are, however, very problem specific, and it is not yet known how to generalize them for the {\em general problem} of device-independent characterizations. Here, we consider an alternative approach inspired by estimation theory, which consists in constructing a point estimate for the underlying quantum distribution from the observed frequencies.  In particular, we show that the natural analogs of two physical estimators employed in usual quantum state tomography, namely,  maximum-likelihood (ML) estimation and least-square-error estimation, also serve as sound estimators in the device-independent context, thereby allowing us to {\em regularize} these raw data and obtain a direct estimation of the properties of interest  through the respective theoretical techniques.

Although there have been  attempts to perform regularizations for device-independent property estimations~\cite{Bancal2014,Schwarz2016} and for the quantification of nonlocality~\cite{Bernhard2014},  these proposals turn out to suffer from the  drawback of generating estimates that can be either {\em nonphysical} or {\em nonunique}. In contrast, our methods are provably free from such problems. Armed with these point estimates of the underlying distribution, a device-independent estimation of the property of interest then follows naturally by applying the algorithmic tools mentioned above (see Fig.~\ref{Fig:DI-estimation}).

\begin{figure}[h!]
	\scalebox{1}{\includegraphics{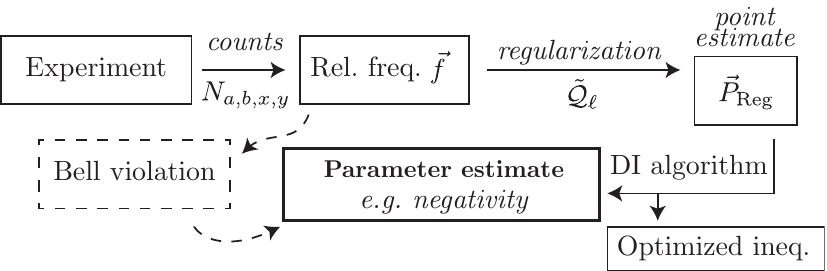}}
	\caption{\label{Fig:DI-estimation} Device-independent (DI) parameter estimation through the regularization of experimentally determined relative frequency $\vecf$. Here, $\AQ_\ell$ denotes an outer approximation of the quantum set $\Q$. The dashed lines show an alternative path to device-independent estimation using the strength of the observed Bell violation. Here, the quality of estimates relies crucially on the choice of Bell inequalities (see Fig.~\ref{Fig:Negativity:Mean} for an example). In contrast, we obtain, as a byproduct in our approach, an optimized device-independent witness (a Bell-like inequality) for the corresponding parameter estimation. See Appendix~\ref{App:Neg} for the connection between such a witness and an ordinary Bell inequality as well as the discussion at page~\ref{Sec:Discussion} for subtleties involved in using the Bell violation of $\vecf$ for device-independent parameter estimations.}
\end{figure}

\section{Preliminaries}
The starting point of device-independent estimations is a Bell experiment. Consider the simplest Bell scenario where Alice and Bob each randomly performs two possible measurements (labeled, respectively, by $x, y\in\{0,1\}$), and where each measurement gives binary outcomes (labeled, respectively, by $a, b\in\{0,1\}$). Generalization of our discussion to other finite Bell scenarios is obvious from the context. The correlations between their measurement outcomes can be summarized by a vector of joint conditional probability distributions $\vecP=\{P(a,b|x,y)\}_{a,b,x,y}\in\mathbb{R}^{16}$.

Denote by $\rho$ the state shared by Alice and Bob, and by $M^\text{A}_{a | x}$  ($M^\text{B}_{b | y}$) the positive-operator-valued-measure elements associated with their measurements. Born's rule dictates that for all $a,b,x,y$, the conditional probability distributions read as $P(a,b|x,y) \stackrel{\Q}{=} \tr \left( \rho\, M^\text{A}_{a | x} \otimes M^\text{B}_{b | y} \right)$
where the positivity and the normalization of probabilities demand that $M^\text{A}_{a | x},M^\text{B}_{b | y}\succeq0$ (matrix positivity) and $\sum_a M^\text{A}_{a | x}=\one_A, \sum_b M^\text{B}_{b | y}=\one_B$ with $\one_A,\one_B$ being identity operators. Throughout, we use $\Q$ to denote the set of quantum distributions, i.e., the collection of $\vecP$ that follows from Born's rule.

Importantly, quantum distributions satisfy the nonsignaling conditions~\cite{Popescu1994,Barrett2005}, i.e., their marginal distributions are independent of the measurement choice of the distant party:
\begin{equation}
\begin{split}\label{Eq:NS}
	\!\!\!\!\!\!P(a|x,y)\equiv\sum_b P(a,b|x,y) &= P (a | x, y'),\,\,\, \forall\, a,x, y, y',\\
	\!\!\!\!\!\!P(b|x,y)\equiv\sum_a P(a,b|x,y) &= P (b | x', y),\,\,\, \forall\,b, x, x', y.
\end{split}	
\end{equation} 
In an experiment, the underlying quantum distribution $P(a,b|x,y)$ is often estimated by computing the relative frequency, i.e.,  $P(a,b|x,y)\approx f(a,b|x,y)=\tfrac{N_{a,b,x,y}}{N_{x,y}}$ where $N_{a,b,x,y}$ is the number of coincidences registered for the combination of outcomes and settings $(a,b,x,y)$ while $N_{x,y} = \sum_{a b} N_{a,b,x,y}$ is the total number of trials pertaining to the measurement choice $(x,y)$. 

Of course, in the asymptotic limit of a large number of trials, i.e., when $\Nt=\min_{x,y} N_{x,y} \rightarrow \infty$, the difference between the true distribution $\vecP$ and the relative frequency $\vecf=\{f(a,b|x,y)\}_{a,b,x,y}$ vanishes. In practice, as $\max_{x,y} N_{x,y}$ is necessarily finite, not only is this difference nonzero but $\vecf$ typically also violates the weaker requirement of the nonsignaling conditions [see Eq.~\eqref{Eq:NS}]. As mentioned above, this discrepancy between theory and practice immediately renders many of the tools developed for device-independent characterizations inapplicable. 

\section{Regularization methods (estimators)}
To overcome this mismatch, one may project the observed frequency $\vecf$ onto an affine subspace $\NSspace$ of $\mathbb{R}^{16}$ which contains only $\vecP$'s that satisfy Eq.~\eqref{Eq:NS}. For example, if one  demands that this projection (via the corresponding projector $\Pi$) commutes~\cite{Renou2017} with {\em all} possible permutations of the labels for parties, settings, and outcomes (e.g., $a=0\leftrightarrow a=1$), then it happens to be equivalent to finding the {\em unique} minimizer of the least-square-error problem: $\vecPregM{$\Pi$}(\vecf)=\argmin_{\vecP\in\NSspace} ||\vecf-\vecP||_2$, where $\|\mathbf{.}\|_p$ denotes the $p$-norm; the regularization invoked in~\cite{Bancal2014} is precisely an application of such a projection (see Appendix~\ref{App:Projection}). 

Albeit intuitive and straightforward, such a projection suffers from the serious drawback that it may give ``negative probabilities" (see Appendix~\ref{App:Unphysical} for an explicit example). Indeed, the possibility of giving rise to an {\em  unphysical} estimate is a problem that such a projection shares with the linear inversion technique employed in standard quantum state tomography (see, e.g.,~\cite{Schwemmer2015}). Moreover, even when $\vecPregM{$\Pi$}(\vecf)$ represents a legitimate probability vector, it may well be outside the quantum set $\Q$. To overcome these issues, one is naturally led  to the \emph{least-square} (LS) estimator in the device-independent context, i.e., $\vecPnqa=\argmin_{\vecP\in\Q} ||\vecf-\vecP||_2$.

We thus see that various estimators  (see also~\cite{Schwarz2016,Bernhard2014} and Appendix~\ref{App:Other})  map $\vecf$ to a regularized distribution $\vecPreg(\vecf)$ that is non-negative and normalized and which satisfies the nonsignaling conditions. However, for a regularization method to be relevant for subsequent property estimation, it is also convenient that $\vecPreg(\vecf)$ is {\em uniquely} determined by $\vecf$. In particular, for a given $\vecf$, a nonunique estimator may give rise to  $\vecPreg(\vecf)$'s with drastically different properties, e.g., some being Bell-inequality violating (and therefore implying some nontrivial features of the underlying system) and some not [which renders that particular $\vecPreg(\vecf)$ useless for device-independent property estimation]. An \emph{ambiguity} then arises: which of these estimates should we rely on for subsequent property estimation? A possibility would be to consider the {\em worst case} over all such estimates, but this clearly complicates the property estimation as one would now need to consider an entire solution set $\{\vecPreg(\vecf)\}$ (the characterization of which is generally nontrivial). This makes evident the inconveniences of 1-norm estimators, and more generally nonunique estimators in the present context. The regularization procedure previously considered in~\cite{Schwarz2016,Bernhard2014}---both being 1-norm estimators---precisely suffers from this nonuniqueness drawback.

In this regard, note that a regularized distribution $\vecPreg(\vecf)$ obtained from minimizing a strictly convex function $g$ over a convex set (such as $\Q$) is provably unique, and is determined by $\vecf$ and $g$ (see, e.g., Theorem 8.3 of~\cite{Beck2018}). Using this observation, we show in Appendix~\ref{App:Uniqueness} that the aforementioned LS estimator is unique. Similarly, the device-independent analog of the ML estimator~\cite{Banaszek1999} is provably unique (see Appendix~\ref{App:Uniqueness} for a proof). To this end, consider the Kullback-Leibler (KL) divergence~\cite{vanDam2005,Acin2005,Kullback1951} (i.e., the relative entropy~{\cite{Cover:Book}}) from some $\vecP\in\Q$ to $\vecf$: 
\begin{equation}\label{Eq:KL}
	\DKL\left(\vecf||\vecP\right)=\sum_{a, b, x, y} f(x,y) f(a,b|x,y) \log_2 \left[ \frac{f(a,b|x,y)}{P(a,b|x,y)} \right],
\end{equation}
where $f(x, y)$ is the relative frequency of choosing the measurement settings labeled by $(x,y)$. 

The quantity $\DKL\left(\vecf||\vecP\right)$ can be seen as a measure of ``statistical closeness"~\cite{vanDam2005,Acin2005} between $\vecf$ and $\vecP$. Indeed, its minimization  over $\vecP\in\Q$ is equivalent~\cite{Cover:Book,vanDam2005} to maximizing the likelihood of producing the observed frequency by $\vecP\in\Q$ (we provide in Appendix~\ref{App:MLE} a proof adapted to the present context). The unique minimizer of $\DKL\left(\vecf||\vecP\right)$ over $\vecP\in\Q$, i.e., $\vecPkl=\argmin_{\vecP\in{\Q}} \DKL\left(\vecf||\vecP\right)$, therefore serves as the equivalent of the ML estimator in the device-independent context. Hereafter, we focus predominantly on this operationally well-motivated estimator. For further details of the LS estimator and some other plausible regularization methods, see, respectively, Appendix~\ref{App:NQA} and Appendix~\ref{App:Other}. 

As it stands, since there is no known exact characterization of $\Q$ using only finite resources, the ML estimator cannot be computed exactly.  Nonetheless, via a converging hierarchy of semidefinite programs (SDPs)~\cite{Navascues2007,Navascues2008a,Doherty2008,Moroder2013}, one can in principle obtain an arbitrary good approximation to $\Q$. To fix ideas, we hereafter focus on employing the  hierarchy $\AQ_\ell$ of approximations to $\Q$ discussed in~\cite{Moroder2013} and~\cite{Vallins2017}. The lowest level of this hierarchy $\AQ_1$ gives a decent outer approximation of $\Q$ known as the almost-quantum set~\cite{Navascues2015}. In general, $\AQ_\ell\subseteq\AQ_{\ell-1}$ for all $\ell\ge2$ and $\lim_{\ell\to\infty}\AQ_\ell\to\Q$. For any fixed $\ell$, although the nonlinear optimization problem $\argmin_{\vecP\in{\Q}} \DKL\left(\vecf||\vecP\right)$ does not appear to be a semidefinite program, we show in Appendix~\ref{App:KL-SCSetc} that it belongs to a more general class of convex optimization problems~\cite{Boyd2004Book} --- an exponential conic program. A minimization of the KL divergence with NPA constraints is thus also efficiently solvable on a computer with a numerical precision of $10^{-6}$ or better.

\section{Notable properties of point estimates}
The uniqueness of $\vecPkl$ and the nonnegativity of the KL divergence ensure that our estimators are {\em consistent}~\cite{Shao2003Book}, in the sense that they provide an estimate that converges to the true distribution $\vecPQ$ in the asymptotic limit of $\Nt\to\infty$. This can be seen by noting that in the asymptotic limit,  $\vecf\to\vecPQ$, the nonnegativity of the KL divergence then implies that the unique minimizer of $\DKL\left(\vecf||\vecP\right)$ over $\vecP\in\Q$ is necessarily given by $\vecP=\vecPQ$. [Likewise, the LS estimator $\vecPnqa$ is provably consistent.]

In practice, one would evidently be more interested  in how these methods fare for finite values of $\Nt$. To gain insights into this, we carry out extensive numerical simulations by (i) picking some ideal $\vecPQ$, (ii) numerically simulating the outcomes of a Bell experiment according to $\vecPQ$ and computing the relative frequency $\vecf$, (iii) computing the point estimate $\vecPreg(\vecf)$ and calculating various quantities of interest, and (iv) repeating steps i-iii $10^4$ times for $\Nt=10^2,10^3,\ldots,10^{10}$ for all $x,y$.\label{Procedure} For simplicity, we take $N_{x,y}=\Nt$, i.e., a constant independent of $x,y$ [this amounts to setting $f(x, y)$ as a constant in Eq.~\eqref{Eq:KL}]. 

Our numerical results in Appendix~\ref{App:Convergence} suggest that {\em in general}, the difference between $\vecPreg(\vecf)$ and the ideal distribution $\vecPQ$, quantified, e.g., via $||\vecPreg(\vecf)-\vecPQ||_1$ (or other $p$ norms), diminishes, as with $||\vecf-\vecPQ||_1$, at a rate proportional to $\tfrac{1}{\sqrt{\Nt}}$. Similar convergence is also observed for $\DKL\left(\vecPkl||\vecPQ\right)$. Moreover, although we have only employed an outer approximation to $\Q$ in the regularization step iii, as our example below illustrates, the regularized distribution $\vecPreg(\vecf)$ can already be used to perform reasonable device-independent property estimations. 

\section{Application to device-independent estimations}
\label{Sec:DI}
As a concrete example of such property estimations via regularization (see Fig.~\ref{Fig:DI-estimation}), consider $\vecPQ=\vecPQ^{\tiny \tau_{1.25}}$, a quantum distribution considered in a  recent Bell test~\cite{Christensen2015} (see Appendix~\ref{App:PQ} for details). Device-independent estimations of the underlying negativity~\cite{Vidal2002} (a well-known entanglement measure) based on {\em ideal} quantum distributions are known to be possible~\cite{Moroder2013}. Here, we illustrate how such an estimation can be realized for {\em finite} data through the regularization of $\vecf$. To facilitate comparison, we plot in Fig.~\ref{Fig:Negativity:Mean} the {\em average} negativity N$(\rho)$ of the underlying state $\rho$ estimated from the regularized distribution $\vecPreg(\vecf)$ (via the SDP described in~\cite{Moroder2013}) against that deducible from the amount of Clauser-Horne-Shimony-Holt~\cite{CHSH} (CHSH) Bell-inequality violation $\S_\CHSH$~\cite{Moroder2013}: N$(\rho)\ge \tfrac{\S_\CHSH-2}{4\sqrt{2}-4}$ for $\S_\CHSH\in[2,2\sqrt{2}]$.

\begin{figure}[h!]
	\scalebox{0.25}{\includegraphics{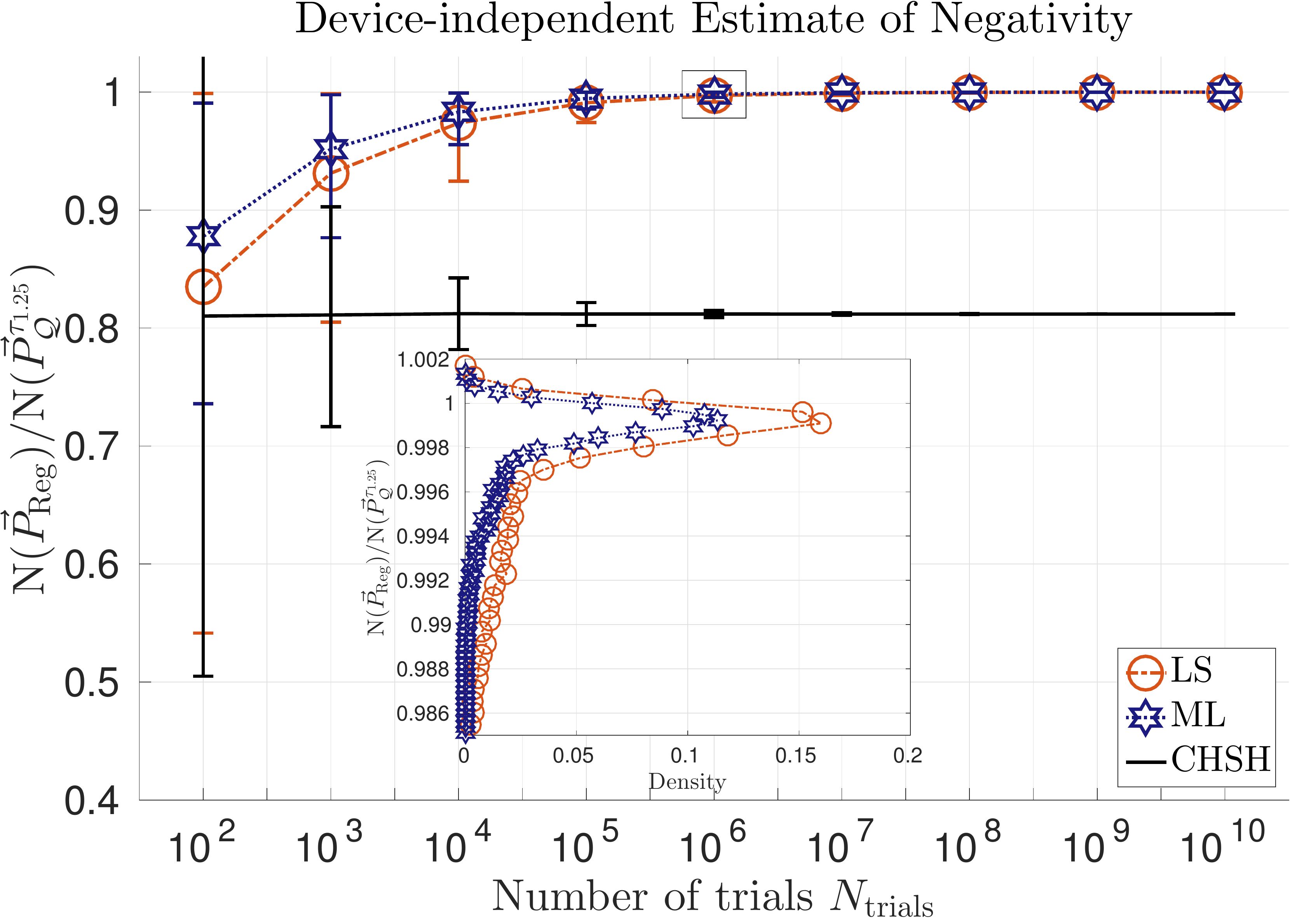}}	
	\caption{\label{Fig:Negativity:Mean} (Color) Mean value of the normalized negativity estimated from the regularized distributions (average over $10^4$ runs) as a function of $\Nt$ based on the relative frequencies $\vecf$ generated from $\vecPQ^{\tau_{1.25}}$ (see Appendix~\ref{App:PQ}), which has N$(\vecPQ^{\tau_{1.25}})\approx 0.38$.  In estimating the negativity, we feed the distributions regularized to $\AQ_2$ into the second level SDP of [19]. The lower and upper limit of each error bar mark, respectively, the 10\% and 90\% window of the spread of the negativity value. The inset shows the corresponding histograms for $\Nt=10^6$. On average, the ML estimator performs considerably better than the LS estimator.}
\end{figure}

A few features of this comparison are worth noting. First, since the negativity estimated directly from the CHSH Bell-inequality violation $\S_\CHSH$ of $\vecf$ depends linearly on this violation, the mean value of the negativity estimated hardly depends on $\Nt$, and is suboptimal. In fact, even in the asymptotic limit, a negativity estimation based on $\vecPQ^{\tau_{1.25}}$ and $\S_\CHSH$ is suboptimal. In contrast, the mean value of the negativity estimated from the regularized distribution $\vecPreg(\vecf)$ rapidly converges to the true value as $\Nt$ increases; already at $\Nt=10^4$, this mean value only differs from the true value by a few percents. Second, note that our negativity estimations based on $\vecPreg(\vecf)$ clearly systematically {\em underestimate}  the amount of negativity present, which is in strong contrast with the results presented in~\cite{Schwemmer2015} for the {\em non-device-independent} scenario using both least-square and maximum-likelihood estimators. (For further examples of underestimation using $\vecf$ sampled from other quantum distributions, see Appendix~\ref{App:Biased}.) 

Of course, instead of the CHSH Bell inequality, one could hope to improve the negativity estimation by considering a Bell-inequality the quantum violation of which is optimized for the negativity estimation of $\vecPQ^{\tau_{1.25}}$ [see Eq.~\eqref{Ineq:Itau} and Appendix~\ref{App:Neg} for details]. Such an optimized device-independent witness may be obtained, e.g., by feeding the DI algorithm with some regularized distribution sampled from $\vecPQ^{\tau_{1.25}}$, as indicated in Fig.~\ref{Fig:DI-estimation}. In practice, however, the relative frequencies sampled from $\vecPQ^{\tau_{1.25}}$ turn out to give---independent of $\Nt$---about half the time, a Bell violation more than that allowed by quantum theory, thereby rendering negativity estimation from this Bell violation impossible in all these cases.

\section{Discussion}
\label{Sec:Discussion}
The device-independent state estimation problem is the core of all state estimation problems from finite data, as it addresses the generic problem of matching empirical data (subjected to statistical fluctuations) with the ideal, theoretical description given by Born's rule. The recent demonstration of loophole-free Bell tests~\cite{hensen_loophole-free_2015, shalm_strong_2015, giustina_significant-loophole-free_2015,rosenfeld_event-ready_2017} has made it clear that the development of reliable techniques for the device-independent estimation of underlying  properties from {\em finite data} not only is of fundamental interest but also would play an indispensable role in the next generation of quantum information protocols.  In fact, although our focus is on a fully device-independent setting, the insights obtained thereof are also relevant in analogous problems in a partially device-independent scenario, such as those incurred in a quantum steering experiment~\cite{Cavalcanti:2015aa}.

In this paper, we have provided the device-independent analog of the maximum-likelihood and the least-square estimators and shown that they are physical, computationally tractable, and unique.  These features render them ideal for bridging the device-independent tools developed for ideal quantum distributions and the experimentally obtained raw data. Generalizing the arguments in~\cite{Schwemmer2015}, however, it can be shown that {\em all} quantum estimators are necessarily biased, as summarized in the following proposition (see Appendix~\ref{App:Biased} for a proof and the corresponding definition of {\em being strictly nonorthogonal}).
\begin{proposition}\label{Prop:Biased}
  Let $\C$ be a closed convex subset of the nonsignaling polytope $\NSset$ (such as $\Q$) with two strictly nonorthogonal extreme points. Any point estimator constrained to give $\vecPreg(\vecf)\in\C$ is necessarily biased.
\end{proposition}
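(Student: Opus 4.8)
The plan is to argue by contradiction, exploiting the fact that with a finite number of trials the frequency vector $\vecf$ ranges over a \emph{finite} set, so that the data-averaged estimate is a \emph{finite} convex combination of points of $\C$. First I would fix any finite number of trials $N_{x,y}\ge 1$ per setting and suppose, for contradiction, that there is an estimator $\vecPreg$ with $\vecPreg(\vecf)\in\C$ for every admissible $\vecf$ which is unbiased, i.e.\ $\sum_{\vecf}\Pr_{\vecP}(\vecf)\,\vecPreg(\vecf)=\vecP$ for \emph{every} true distribution $\vecP\in\C$, where $\Pr_{\vecP}(\vecf)$ is the product over $(x,y)$ of the multinomial probabilities of observing the counts encoded by $\vecf$ when the underlying distribution is $\vecP$.

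Next I would specialise the unbiasedness condition to a true distribution that is an extreme point. If $\vecP_{\rm ext}$ is an extreme point of $\C$ and we take the true distribution to be $\vecP_{\rm ext}$, then $\vecP_{\rm ext}=\sum_{\vecf}\Pr_{\vecP_{\rm ext}}(\vecf)\,\vecPreg(\vecf)$ exhibits $\vecP_{\rm ext}$ as a convex combination, with strictly positive weights, of the finitely many points $\vecPreg(\vecf)\in\C$ for which $\Pr_{\vecP_{\rm ext}}(\vecf)>0$. By the defining property of an extreme point, this forces $\vecPreg(\vecf)=\vecP_{\rm ext}$ for \emph{every} $\vecf$ occurring with positive probability under $\vecP_{\rm ext}$. (The same conclusion holds almost surely if one allows $\vecPreg$ to be randomised or to depend on finer features of the data than $\vecf$, since $\C$ is convex.)

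The key step is then to collide the constraints coming from the two extreme points $\vecP_1,\vecP_2\in\C$ guaranteed by the hypothesis. The notion of being \emph{strictly nonorthogonal} (defined in Appendix~\ref{App:Biased}) is set up precisely so that the supports of $\vecP_1$ and $\vecP_2$ overlap setting by setting; consequently there is at least one frequency vector $\vecf^\star$ that occurs with strictly positive probability under \emph{both} $\vecP_1$ and $\vecP_2$ --- for instance the one which, for each $(x,y)$, assigns all $N_{x,y}$ counts to a single outcome pair $(a^\star,b^\star)$ lying in the common support of $P_1(\cdot,\cdot|x,y)$ and $P_2(\cdot,\cdot|x,y)$. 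Applying the conclusion of the previous paragraph twice then gives $\vecP_1=\vecPreg(\vecf^\star)=\vecP_2$, contradicting the fact that $\vecP_1\ne\vecP_2$. Hence no estimator into $\C$ can be unbiased, i.e.\ every such estimator is biased; since $\Q$ (like any $\C$ of the stated form that contains more than one point) possesses such a pair of extreme points, the claim follows.

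The part I expect to require the most care is making precise the passage from the ``strictly nonorthogonal'' hypothesis to the existence of the shared frequency vector $\vecf^\star$: one must verify that overlapping supports really do yield a common atom of the two product-multinomial sampling distributions for every finite sample size of interest (and, if one wants to mirror the state-tomography statement of~\cite{Schwemmer2015} closely, that a suitably robust --- not merely nonempty --- overlap is what the definition supplies). Everything else is an immediate consequence of convexity and the finiteness of the sample space, so no quantitative estimates are needed.
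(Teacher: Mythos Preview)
Your proposal is correct and follows essentially the same argument as the paper's proof: both proceed by contradiction, use extremality to force $\vecPreg(\vecf)=\vecP_j$ on the support of the sampling distribution of each extreme point $\vecP_j$, and then invoke the strictly-nonorthogonal hypothesis to produce a frequency vector $\vecf^\star$ in the common support, yielding $\vecP_1=\vecP_2$. Your explicit construction of $\vecf^\star$ (placing all counts in a shared-support outcome for each setting) is slightly more detailed than the paper, which simply asserts $\mathcal{F}_1\cap\mathcal{F}_2\neq\emptyset$.
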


As stressed above, one of the goals of performing a device-independent state estimation is to obtain therefrom various device-independent parameter (property) estimates (see~Fig.~\ref{Fig:DI-estimation}). Given Proposition~\ref{Prop:Biased}, one may expect that any such parameter estimates are also biased. Indeed, for data sampled from a distribution that is not Bell-inequality violating, but which is sufficiently {\em near} to the boundary of the local polytope, an {\em overestimation} of the corresponding negativity is to be expected.\footnote{Evidently, due to statistical fluctuations, some of the $\vecPreg(\vecf)$ would lie inside the local polytope, giving a zero negativity value while some other $\vecPreg(\vecf)$ would violate a Bell inequality, giving a strictly positive negativity value. Their average is thus positive, thereby resulting in an overestimation.} On the other hand, our numerical results show that  for data sampled from some extremal quantum distribution, we observe, instead, an {\em underestimation} of the underlying negativity (see Fig.~\ref{Fig:Negativity:Mean}) and/or Bell violation (see Fig.~\ref{Fig:Bias}). Admittedly, for such point estimation to be useful, a more thorough investigation is needed in order to determine when such device-independent estimations are trustworthy (in the sense of {\em not} leading to an overestimation).

\label{Pg:subtlety}As one can perform analogous property estimation directly from the observed Bell-inequality violation, our approach of {\em estimation by regularization} may seem redundant at first glance. However, as we illustrate in the negativity example, the quality of an estimate obtained from Bell-inequality violation depends heavily on the choice of the inequality and it is {\em a priori} not always obvious which Bell inequality (the violation of which is used as a device-independent witness) will provide an optimal estimate. In contrast, our approach yields an {\em optimized} Bell-like inequality for witnessing the desired property as a byproduct. Moreover, due to the signaling nature of the relative frequencies, even physically equivalent~\cite{Collins2004} Bell inequalities may give rise to different estimates~\cite{Renou2017}, thereby resulting in further ambiguities.

We now briefly comment on some other possibilities for future research. Implicit in our discussion is the assumption that the experimental trials are independent and identically distributed. While this is an often adopted assumption in the non-device-independent setup, its justification in a device-independent setting is far from trivial. A natural line of research thus consists of relaxing this assumption while maintaining the possibility to perform a reliable state estimation. 

In close connection to this is the problem of establishing a confidence region: how does one generalize the tools presented here to construct a region of estimates in accordance to, say, some pre-defined likelihood~\cite{arXiv:1202.5270, Christandl2012, Sugiyama2013, Shang2013, Faist2016}? For some device-independent tasks, specific techniques~\cite{Zhang2011,Zhang2013,Pironio2013,arXiv:1611.00352:brief,arXiv:1702.05178:brief,Dupuis:1607.01796:brief,Rotem2017} for dealing with finite statistics (possibly with the inclusion of confidence regions) have been developed, but general techniques for establishing confidence regions associated with generic quantum properties are still lacking. To appreciate the importance of constructing these confidence regions, recall from our example that the Bell violation given by the observed relative frequency---due to statistical fluctuations---may give rise to a value beyond that allowed by quantum theory, thus rendering this Bell violation useless for the estimate of a quantum parameter. To this end, we remark that the work of~\cite{Wills:Unpublished} suggests that the hypothesis-testing technique of~\cite{Zhang2011}, together with the numerical technique developed here, can indeed be used to provide a confidence region for general device-independent estimates. Addressing these questions, however, clearly goes beyond the scope of the present paper and is something that we plan to take up in the sequel to this paper.

\begin{acknowledgments}
We are grateful to B\"anz Bessire, Daniel Cavalcanti, Flavien Hirsch, Sacha Schwarz and Paul Skrzypczyk for useful discussions, and to Boris Bourdoncle, Jedrzej Kaniewski, Lukas Knips, as well as a few anonymous referees for useful comments on an earlier version of this paper. This work is supported by the Ministry of Science and Technology, Taiwan (Grant No. 104-2112-M-006-021-MY3), the National Center for Theoretical Sciences of Taiwan, the Swiss National Science Foundations [through the NCCR QSIT as well as Grants No. PP00P2-150579 and No. P2GEP2$\_$162060 (Early Postdoc.Mobility)],  the Ontario Research Fund, and the Natural Sciences and Engineering Research Council of Canada.
\end{acknowledgments}

\appendix

\section{Notations and definitions}
\label{App:GeneralP}

Throughout, we  label the measurement settings (inputs) by $x,y,z,\ldots$ and the corresponding measurement outcomes by $a,b,c,\ldots$ where each of these labels are elements from some finite sets.  The correlations between their measurement outcomes are succinctly summarized by the vector of joint conditional distributions $\vecP:=\{P(a,b,c,\ldots|x,y,z,\ldots)\}$. For simplicity, we will focus our discussion on the bipartite cases.

We denote by $\Pset$ the set of all legitimate probability distributions, i.e., those that obey:
\begin{subequations}
\label{Eq:App:Signaling}
\begin{align}
 \label{Eq:ConeNorm}
 \sum_{a,b} P(a,b|x,y)  &= 1\quad \forall\quad x,y,    \\
\label{Eq:App:Signaling:NonNeg}
P(a,b|x,y)                              & \geqslant 0\quad \forall\quad x,y,a,b.
\end{align}
\end{subequations}
Moreover, we denote by $\N$ the subset of $\Pset$ which satisfies {\em also} the {\em nonsignaling conditions}~\cite{Popescu1994,Barrett2005}:
\begin{equation}
\begin{split}\label{App:Eq:NS}
	\!\!\!\!\!\!P(a|x,y)\equiv\sum_b P(a,b|x,y) &= P (a | x, y'),\, \forall\, a,x, y, y',\\
	\!\!\!\!\!\!P(b|x,y)\equiv\sum_a P(a,b|x,y) &= P (b | x', y),\, \forall\,b, x, x', y.
\end{split}	
\end{equation} 
It is worth noting that both $\Pset$ and $\NSset$~\cite{Barrett2005} are convex polytopes, i.e., convex sets having a finite number of extreme points, and are conventionally referred to, respectively, as the signaling and the nonsignaling polytope. Here, we describe both $\Pset$ and $\NSset$ in their H-representation using linear equations and inequalities.

\begin{definition}
\label{Def:NSaffine}
The nonsignaling affine space $\NSspace \supset \NSset $ is the smallest-dimensional affine space containing the set $\NSset$ and is given by the distributions satisfying Eqs.~\eqref{Eq:ConeNorm} and~\eqref{Eq:NS}. 
\end{definition}

\section{Further details about the projection method}
\label{App:Projection}

Here, we provide some further details about the projection mentioned in the main text.

\subsection{Equivalent definitions}
\label{App:Projection:Def}
The projection method can be defined in three equivalent ways. 

\begin{enumerate}

\item It is the minimizer $\vecPpi$ of the following optimization problem: 
  \begin{equation}
    \label{Eq:App:Projection:AsMinimization}
    \min_{\vecP\in \NSspace} \left\| \vecP - \vecf\, \right\|_2 = \left\| \vecPpi - \vecf\, \right\|_2
  \end{equation}
  
\item It is the nonsignaling part $\vecf_{\NSspace} \in \NSspace$, i.e., the first component of the decomposition
  \begin{equation}
    \label{Eq:App:Projection:AsOrthogonal}
    \vecf = \vec{f}_{\NSspace} + \vec{f}_\text{SI},
  \end{equation} 
  where $\vec{f}_\text{SI}$ is the signaling component of $\vecf$ and is orthogonal to all vectors in the affine subspace $\NSspace$.
  
\item It is the result
  \begin{equation}
    \label{Eq:App:Projection:AsProjection}
    \vecPpi = \Pi \vecf
  \end{equation}
  of the projection onto $\NSspace$ by the linear operator $\Pi$, where $\Pi$ is uniquely determined by the set of commutation relations $\Pi M = M \Pi$ with $M$ being {\em any} permutation matrix corresponding to relabeling~\cite{Collins2004} of outputs, inputs and parties.
  
\end{enumerate}
For the bipartite Bell scenario with binary inputs and outputs, their equivalence can be shown using the decomposition given in~\cite[Prop. in Sec. 3]{Renou2017}. For more general Bell scenarios, a proof of their equivalence analogously follows from group representation theory, and will be made available in~\cite{Rosset:Unpublished}.

\subsection{Explicit form of the projection matrix in the simplest Bell scenario}

Using the notation of~\cite{Renou2017}, the projection operator $\Pi$ in the bipartite Bell scenario with binary inputs and outputs admits the explicit form
\begin{align}\label{Eq:Pi}
\Pi_{abxy,a'b'x'y'} = \mathbbm{1}_{16} - \frac{1}{16} \sum_{(i,j,k,l)\in \mathcal{I}} i^{a+a'} j^{b+b'} k^{x+x'} l^{y+y'},
\end{align}
where $\mathbbm{1}_{16}$ is the 16$\times$16 identity matrix, the sum is carried out over the four quadruplets $\mathcal{I} = \{(+1,-1,-1,\pm 1),(-1,+1,\pm 1, -1)\}$, the rows of $\Pi$ are indexed by $(a,b,x,y)$, while its columns are indexed by $(a',b',x',y')$.

\subsection{An algorithm for performing the projection for the more general Bell scenarios}
\label{App:Algo}

In general, in (bipartite) Bell scenarios where parties have binary outputs, it is customary to write $A = (-1)^a$ and $B = (-1)^b$, and compute the expectation values (correlators~\cite{Bancal2012}):
\begin{equation}
  \label{Eq:correlator:oneparty}
  \left< A_x \right> = \sum_{A=\pm1} A ~ P(A|x) = \sum_{a=0,1} (-1)^a P(a|x)
\end{equation}
and similarly for $\left< B_y \right>$, while
\begin{equation}
  \label{Eq:correlator:twoparties}
  \left< A_x B_y \right> = \sum_{A, B} A ~ B ~ P(a,b|x,y) = \sum_{a,b} (-1)^{a+b} P(a,b|x,y).
\end{equation}

Together, the $\left<A_x\right>$, $\left<B_y\right>$ and $\left< A_x B_y \right>$ represent a parametrization of the nonsignaling subspace. In particular, the above relations can be inverted to give:
\begin{equation}\label{Eq:Cor2Prob}
\begin{split}
	P(a,b|x,y)=\frac{1}{4}\big[1&+(-1)^a\left<A_x\right>+(-1)^b\left<B_y\right>\\
					&+(-1)^{a+b}\left<A_xB_y\right>\big]
\end{split}
\end{equation}
However, when the distribution $P(a,b|x,y)$ is signaling, the marginals $P(a|x,y)$, $P(b|x,y)$ depend on both inputs $(x,y)$.
As shown in~\cite{Renou2017}, in the case of binary inputs, the (nonsignaling) correlators $\left<A_x\right>$ should be computed as $\left< A_x \right> = \sum_{a} (-1)^a \tilde{P}_\text{A}(a|x)$ where $\tilde{P}_\text{A}(a|x)$ is given by:
\begin{equation}
  \label{Eq:correlator:average}
  \tilde{P}_\text{A}(a|x) = \frac{1}{2} \sum_{b,y} P(a,b|x,y)
\end{equation}
i.e., averaged uniformly over $y=0,1$. This is the only choice that keeps the projection invariant under permutations of $y$.
We make the same choice to compute $\left< B_y \right>$.
Now, the correlators $\left<A_x\right>$, $\left<B_y\right>$ and $\left< A_x B_y \right>$ correspond to a unique distribution $P_\Pi(a,b|x,y)$ in the nonsignaling subspace, cf. Eq.~\eqref{Eq:Cor2Prob}, which is the result of the projection. This is the essence of the regularization method employed in~\cite{Bancal2014}.

The construction above extends to an arbitrary number of parties, inputs, and outputs, specified by the tuple $(n,m,k)$.
We give a sketch below of this generalization, which satisfies criteria \eqref{Eq:App:Projection:AsMinimization}--\eqref{Eq:App:Projection:AsProjection} (a proper proof will be discussed in a future work~\cite{Rosset:Unpublished}).

The generalization to additional parties and inputs is simple.
Write, for example:
\begin{equation}
  \label{Eq:averaged:marginal}
  \tilde{P}_\text{A}(a|x) = \frac{1}{m^{n-1}} \sum_{b,y,c,z...} P(a,b,c,...|x,y,z...),
\end{equation}
and the same for other marginal distributions $\tilde{P}_\text{...}$, by averaging uniformly over all inputs {\em not} fixed by the indices of the marginals $\tilde{P}_\text{...}$.
Then, use these $\tilde{P}_\text{...}$ to compute the correlators according to the straightforward multipartite generalizations of Eqs.~\eqref{Eq:correlator:oneparty} and \eqref{Eq:correlator:twoparties}.

To cater for scenarios with nonbinary outputs, the correlators have to be generalized.
We use the framework proposed in~\cite{Bancal2010}, adapting slightly the notation to the present paper:
\begin{equation}
  \label{Eq:correlator:general}
  \left< A^i_x \right > = \sum_a c_{ia} \tilde{P}(a|x), \quad 0 \le i \le k-2,
\end{equation}
where $c_{ia} = k ~ \delta_{ia} - 1$ and we omitted the coefficient for $i=k-1$ as it is linearly dependent on the others.
Note that $\left< A^0_x \right> = \left< A_x \right>$ in the case of binary outcomes ($k=2$).
Multipartite correlators are written, for example:
\begin{equation}
  \label{Eq:correlator:multipartite}
  \left< A^i_x B^j_y \right > = \sum_{a,b} c_{ia} c_{jb} \tilde{P}(a,b|x,y), \quad 0 \le i,j \le k-2.
\end{equation}

Starting from a signaling distribution $P(a,b,\ldots|x,y,\ldots)$, we compute the generalized correlators using the averaged marginals of Eq.~\eqref{Eq:averaged:marginal}.
Then we interpret the resulting correlators as a description of the nonsignaling distribution $P_\Pi(a,b,\ldots|x,y,\ldots)$. The whole process is a projection: all operations are linear, and the averaging [e.g., $P(a|x,y) \rightarrow \tilde{P}(a|x)$] is injective for nonsignaling distributions. Importantly, the projection defined by this algorithm commutes with relabelings and thus corresponds to the three equivalent definitions given in Appendix~\ref{App:Projection:Def}.

\subsection{An explicit example showing that the output of the projection method may be nonphysical}
\label{App:Unphysical}

Although easy to compute, the projection method may give rise to coefficients of $\vecPpi$ that are negative. Indeed, the output space of the projection method is the nonsignaling {\em affine space} $\NSspace$. We now give an explicit example to illustrate this fact. Consider some relative frequency $\vecf$ given in the compact matrix representation:

\begin{equation}\label{Eq:vecf:eg}
\!\!\!\!\vecf=\left[\begin{array}{c|c}
f(a,b|0,0) & f(a,b|0,1)  \\\hline
f(a,b|1,0) & f(a,b|1,1)  \\
\end{array}\right]
=\tfrac{1}{10}\left[
\begin{array}{cc|cc}
3 & 0 & 7 & 0 \\
1 & 6 & 1 & 2 \\\hline
5 & 1 & 1 & 6 \\
1 & 3 & 3 & 0 \\
\end{array}\right]
\end{equation}
where the entries in each block are arranged such that the value of $a$ ($b$) increases downward (rightward). By applying the projection matrix $\Pi$ given in Eq.~\eqref{Eq:Pi} to this signaling distribution, one obtains:
\begin{equation*}\label{Eq:Pi-output}
	\vecPpi\!=\!\left[\begin{array}{c|c}
	P(a,b|0,0) & P(a,b|0,1)  \\\hline
	P(a,b|1,0) & P(a,b|1,1)  \\
	\end{array}\right]\!=
	\!\tfrac{1}{40}\!\!\left[
	\begin{array}{cc|cc}
	18 & 2 & 20 & 0 \\
	2 & 18 & 4 & 16 \\\hline
	19 & 7 & 7 & 19 \\
	1 & 13 & 17 & -3 \\
\end{array}\right],
\end{equation*}
which is easily seen to satisfy the nonsignaling conditions of Eq.~\eqref{Eq:NS}. However, this $\vecPpi$ is evidently nonphysical as its entry for $x=y=a=b=1$ is negative.

\section{Further details about the device-independent least-square method}
\label{App:NQA}

Here, we present the details of the device-independent analog of the least-square tomography method. Formally, the method amounts to finding the {\em unique} minimizer of the least-square problem: 
\begin{equation}
 \label{Eq:NQA2Method}
  \vecPnqa=\argmin_{\vecP\in\Q} ||\vecf-\vecP||_2.
\end{equation}

\subsection{Equivalence to performing a projection and minimization of the 2-norm distance from $\vecPpi$ to $\Q$}
\label{App:Decomposition}

We now prove that the above optimization is equivalent to first performing the projection method, followed by performing a minimization of the 2-norm distance from $\vecPpi$ to $\Q$. For convenience, we shall prove this equivalence, instead, for any converging superset relaxation~\cite{Navascues2008a,Doherty2008,Moroder2013} $\AQ_\ell$ of the quantum set $\Q$. The desired equivalence then follows from the fact that $\lim_{\ell\to\infty} \AQ_\ell =\Q$.

\begin{lemma}\label{NQA-Proj-NQA}
Given the relative frequency $\vecf$, the least-square estimator (when $\AQ_\ell$ is used to approximate $\Q$) satisfies $\vecPnqa=\argmin_{\vecP\in\AQ_\ell} \|\vecP-\vecPpi\|_2$.
\end{lemma}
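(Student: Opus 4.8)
The plan is to exploit the orthogonal decomposition in Eq.~\eqref{Eq:App:Projection:AsOrthogonal}, $\vecf = \vecf_{\NSspace} + \vecf_\text{SI}$, where $\vecf_{\NSspace} = \vecPpi = \Pi\vecf$ lies in the nonsignaling affine space $\NSspace$ and $\vecf_\text{SI}$ is orthogonal to every vector in $\NSspace$. Since $\AQ_\ell \subseteq \Q \subseteq \NSset \subseteq \NSspace$ — every level of the NPA-type hierarchy consists of distributions obeying the nonsignaling conditions — any candidate $\vecP \in \AQ_\ell$ also lies in $\NSspace$. Hence, for any such $\vecP$, the difference $\vecPpi - \vecP$ is a vector lying in the linear space underlying the affine space $\NSspace$ (the difference of two points of an affine space is a vector in its direction space), and therefore $\vecPpi - \vecP$ is orthogonal to $\vecf_\text{SI}$.

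First I would write $\vecf - \vecP = (\vecf - \vecPpi) + (\vecPpi - \vecP) = \vecf_\text{SI} + (\vecPpi - \vecP)$ and apply the Pythagorean identity, using the orthogonality just established:
\begin{equation}
  \label{Eq:Pythagoras}
  \|\vecf - \vecP\|_2^2 = \|\vecf_\text{SI}\|_2^2 + \|\vecPpi - \vecP\|_2^2.
\end{equation}
The term $\|\vecf_\text{SI}\|_2^2$ is a constant independent of $\vecP$. Therefore minimizing $\|\vecf - \vecP\|_2$ over $\vecP \in \AQ_\ell$ is equivalent to minimizing $\|\vecPpi - \vecP\|_2$ over the same set, and since the square root is monotone the two argmin problems coincide:
\begin{equation}
  \argmin_{\vecP\in\AQ_\ell} \|\vecf - \vecP\|_2 = \argmin_{\vecP\in\AQ_\ell} \|\vecPpi - \vecP\|_2,
\end{equation}
which is exactly the claim (with the left-hand side being $\vecPnqa$ when $\AQ_\ell$ is used in place of $\Q$). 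Passing to the limit $\ell\to\infty$ then recovers the statement for $\Q$ itself, as already noted in the preamble to the lemma.

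Two minor points deserve care rather than constituting genuine obstacles. The first is making precise that ``orthogonal to all vectors in the affine subspace $\NSspace$'' should be read as orthogonal to the direction (translation) space of $\NSspace$, so that the cross term in the expansion of $\|\vecf-\vecP\|_2^2$ genuinely vanishes for every $\vecP\in\AQ_\ell$; this is immediate once one notes $\vecPpi,\vecP\in\NSspace$ forces $\vecPpi-\vecP$ into that direction space. The second is the inclusion $\AQ_\ell\subseteq\NSspace$, which I would justify by recalling that each relaxation $\AQ_\ell$ is built from moment matrices whose entries automatically satisfy the linear nonsignaling constraints of Eq.~\eqref{Eq:NS} together with normalization, hence sits inside $\NSset\subseteq\NSspace$. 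I expect the main (and only real) subtlety to be this bookkeeping about affine versus linear spaces; the core computation is just one application of the Pythagorean theorem.
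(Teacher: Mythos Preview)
Your argument is correct and essentially identical to the paper's own proof: both use the orthogonal decomposition $\vecf=\vecPpi+\vecf_\text{SI}$, observe that $\vecP-\vecPpi$ lies in the direction space of $\NSspace$ for any $\vecP\in\AQ_\ell\subseteq\NSset$, and then apply Pythagoras to drop the constant $\|\vecf_\text{SI}\|_2^2$. One small slip: the inclusion chain should read $\Q\subseteq\AQ_\ell\subseteq\NSset\subseteq\NSspace$ (the $\AQ_\ell$ are \emph{outer} approximations of $\Q$), but this does not affect the proof since the only inclusion you actually use is $\AQ_\ell\subseteq\NSspace$, which you justify correctly.
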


\begin{proof}
 First, note from Appendix~\ref{App:Projection:Def} that any given relative frequency $\vecf$ can be decomposed as $\vecf = \vecf_{\overline\NSset}+\vecf_{\rm SI}$, where $\vecf_{\overline\NSset}\in\overline{\NSset}$ is orthogonal to $\vecf_{\rm SI}$. Similarly, for any $\vec{P} \in \AQ_\ell \subset \NSset \subset \overline\NSset$, we must have $(\vec{P}-\vecf_{\overline\NSset}) \in \overline\NSset$, which is orthogonal to $\vecf_{\rm SI}$. It then follows from the definition of the least-square method that
 \begin{align*}
   \vecPnqa &= \argmin_{\vec{P}\in\AQ_\ell} \|\vec{P}-\vecf\|_2  = \argmin_{\vec{P}\in\AQ_\ell} \|(\vec{P}-\vecf_{\overline\NSset})-\vecf_{\rm SI}\|_2\\
            & = \argmin_{\vec{P}\in\AQ_\ell} \|\vec{P}-\vecf_{\overline\NSset}\|_2 = \argmin_{\vec{P}\in\AQ_\ell} \|\vec{P}-\vecPpi\|_2        
 \end{align*}
where the second to last equality follows from the orthogonality of $\vec{P}-\vecf_{\overline\NSset}$ and $\vecf_{\rm SI}$ and the fact that $\|\vecf_{\rm SI}\|^2$ is a {\em constant} in the 2-norm minimization, while the last equality follows from the equivalence between the second and the third definition of the projection method.
\end{proof}
Note that although $\vecPpi$ is not necessarily in $\AQ_\ell$, if it happens that $\vecPpi\in\AQ_\ell$ then the equivalent regularization shown in Appendix~\ref{App:NQA} implies that $\vecPnqa=\vecPpi$.

\subsection{Formulation as a semidefinite program}
\label{App:NQA2-SDP}

When the quantum set $\Q$ is approximated by a superset relaxation $\AQ_\ell$ that admits a semidefinite programming~\cite{Boyd2004Book} characterization, the optimization problem of Eq.~\eqref{Eq:NQA2Method} can also be solved as an SDP. To this end, note that the SDP characterization of $\AQ_\ell$ is achieved in terms of some moment matrix $\chi$ that contains all the entries of $\vecP$ as some of its matrix elements. 

Using the characterization of positive semidefinite matrices via their Schur complements (see, e.g., Theorem 7.7 of~\cite{Horn1985Book}), we can then reformulate Eq.~\eqref{Eq:NQA2Method} (with $\AQ_\ell$ approximating $\Q$) as: 
\begin{equation}\label{Eq:NQA2b}
\begin{split}
	&\argmin_{\vecP\in\AQ_\ell}\ \ \ \ \qquad s \\
	&\text{s.t.}\quad 
	\begin{pmatrix} 
	s \mathbbm{1} & \vecf-\vecP \\ 
	\vecf\top-\vecP\top & s
	\end{pmatrix} \succeq 0,
\end{split}
\end{equation}
where $\mathbbm{1}$ is the identity matrix having the same dimension as the column vector $\vecf$, and $\vecf\top$ is the transpose of $\vecf$. Evidently, we see from Eq.~\eqref{Eq:NQA2b} that the equivalent optimization problem of Eq.~\eqref{Eq:NQA2Method} now involves only an objection function and matrix inequality constraints  that are {\em linear} in all their optimization variables: $s$, $\vecP$ and some other entries of $\chi$ (that cannot be estimated from experimental data). Thus, as claimed, the minimization of the 2-norm of $\vecf-\vecP$ over $\vecP\in\AQ_\ell$ can indeed be cast as an SDP.

\section{Further details about the Kullback-Leibler divergence and the corresponding regularization method}
\label{App:KL}

The Kullback-Leibler (KL) divergence:
\begin{equation}
\label{Eq:App:KL} 
	\DKL\left(\vec{v}||\vec{w}\right)=\sum_{i} v_i \log_2 \left( v_i / w_i \right),
\end{equation}
is conventionally defined for unconditional probability distributions $v_i$/ $w_i$ [such as $P(a,b,x,y)$ and $f(a,b,x,y)$]. In the explicit examples studied, we fixed $P(x,y) = f(x, y) = 1/|\X \times \Y|$. This allows us to keep our definition of the ML method valid when $f(x,y)$, sampled from $P(x,y)=$ constant, has itself statistical fluctuations, as reflected in our definition of the KL divergence given in Eq.~(2) in the main text. 

Note that while the KL divergence is a statistical distance, it is {\em not} a metric as it is asymmetric and violates the triangle inequality. To appreciate the relevance of this asymmetry, see~\cite{vanDam2005}.

\subsection{Connection to maximum likelihood}
\label{App:MLE}

The equivalence between the minimization of the KL divergence over some $\vecP\in\C$ (for some set $\C$) to $\vecf$ and the maximization of the likelihood of generating $\vecf$ from $\vecP$ can be seen as follows:
\begin{equation}\label{Eq:KL:likelihood}
\begin{split}
	&\min_{\vecP\in\C} \DKL\left(\vecf||\vecP\right)\\
	=&\min_{\vecP\in\C}\sum_{a, b, x, y} f(x,y) f(a,b|x,y) \log_2 \left[ \frac{f(a,b|x,y)}{P(a,b|x,y)} \right],\\
	=&\kappa +\min_{\vecP\in\C}-\tfrac{1}{N}\sum_{a, b, x, y} N_{a,b,x,y}  \log_2 P (a, b |x, y),\\
	=&\kappa -\tfrac{1}{N}\max_{\vecP\in\C} \sum_{a, b, x, y}   \log_2 P (a, b |x, y)^{N_{a,b,x,y}},\\
	=&\kappa -\tfrac{1}{N}\max_{\vecP\in\C}  \log_2 \prod_{a, b, x, y} P (a, b |x, y)^{N_{a,b,x,y}},\\	
\end{split}
\end{equation}
where $\kappa:=\sum_{a, b, x, y} f(x,y)f(a, b| x, y)  \log_2 f(a, b |  x, y)$ is a constant of the optimization, $N:=\sum_{x,y,a,b} N_{a,b,x,y}$, and in the second equality we have used the definition of the relative frequency $\vecf$ and the fact that $f(x, y)=\tfrac{N_{x,y}}{N}$. In the last line of Eq.~\eqref{Eq:KL:likelihood}, the argument of the maximization is the log likelihood of observing $N_{a,b,x,y}$ times the event labeled by $(x,y,a,b)$  with probability $P(a,b|x,y)$. Hence, we see that the minimization of the KL divergence is equivalent to maximizing the likelihood of generating $\vecf$ given $P(a,b|x,y)$.

\subsection{Formulation as a conic program}
\label{App:KL-SCSetc}

Here, we briefly explain how the device-independent ML method can be formulated and solved as a conic program (CP) with an exponential cone. Recall from the main text that for any given relative frequency $\vecf$, the ML estimator (with the quantum set approximated by $\AQ_\ell$) works by solving:
\begin{equation}\label{Eq:KL-method}
	\argmin_{\vecP\in \AQ_\ell}\quad \sum_{a, b, x, y} f(x,y) f(a,b|x,y) \log_2 \left[ \frac{f(a,b|x,y)}{P(a,b|x,y)} \right].
\end{equation}

A conic program takes the canonical form of:
\begin{align}
\label{Eq:ConicProgram}
  \min \quad & \vec{c}\cdot \vec{x}, \nonumber \\
  \text{s.t.} \quad & A\vec{x} = \vec{b}, \nonumber \\ 
  & \vec{x} \in K,
\end{align}
where $K$ is a convex cone, such as an exponential cone:
\begin{equation}
\label{Eq:ExponentialCone}
K_\text{exp}=\{(u,v,w) ~ | ~ v e^{u/v} \le w, v\ge 0\}.
\end{equation}

After discarding the constant term and folding the factors in $f(a,b,x,y) = f(x, y) f(a,b|x,y)$, the minimizer of Eq.~\eqref{Eq:KL-method} can be obtained by solving:
\begin{align}
  \vecPkl = \argmin \quad & \sum_{abxy} f(a,b,x,y) \log_2 \frac{1}{P(a,b|x,y)} \nonumber \\
  \text{s.t.} \quad \chi & \succeq 0, \nonumber \\
             \tr[ F_{abxy} \chi ] & = P(a,b|x,y) \quad \forall\quad a,b,x,y, \nonumber \\
  \tr[ G_k \chi ] & = 0, \quad k = 1,2,...
\end{align}
where $\chi$ is the moment matrix associated with $\AQ_\ell$, while $F_{abxy}$ and $G_k$ encode the equality constraints associated with the structure of this moment matrix. This problem has the conic form of Eq.~\eqref{Eq:ConicProgram}:
\begin{subequations}
\begin{align}
  \vecPkl = \argmax \quad & \sum_{abxy} f(a,b,x,y) u_{abxy} \\
  \text{s.t.} \quad \chi & \succeq 0, \label{Eq:ConicConstraintSDP}\\
                          e^{u_{abxy}} & \le P(a,b|x,y) \quad \forall\quad a,b,x,y, \label{Eq:ConicConstraintExponential} \\
  \tr[ F_{abxy} \chi ] & = P(a,b|x,y) \quad \forall\quad a,b,x,y, \\
  \tr[ G_k \chi ] & = 0, \quad k = 1,2,...
\end{align}
\end{subequations}
where the constraint~\eqref{Eq:ConicConstraintSDP} is that for a positive semidefinite  cone and the constraint~\eqref{Eq:ConicConstraintExponential} is that for copies of the exponential cone~\eqref{Eq:ExponentialCone} (with dummy variables $v_{abxy} = 1$ for all $a,b,x,y$). Thus, we see that the problem of Eq.~\eqref{Eq:KL-method} is indeed an exponential conic program.

\section{Details of numerical investigations}
\label{App:Numerics}

\subsection{Explicit form of the quantum distribution $\vecPQ$ considered}
\label{App:PQ}

Here, we provide the explicit form of the ideal quantum distributions $\vecPQ$ employed in our numerical studies and an explicit quantum strategy realizing each of these correlations. First, we consider $\vecPQ=\vecPQ^\text{\tiny CHSH}$ with entries given by $\tfrac{1}{4}+(-1)^{a+b+xy}\tfrac{\sqrt{2}}{8}$. $\vecPQ^\text{\tiny CHSH}$ is known to violates maximally the CHSH~\cite{CHSH} Bell inequality: 
\begin{equation}\label{Ineq:CHSH}
	\I_\text{\tiny CHSH} = \sum_{a,b,x,y=0}^1 (-1)^{a+b+xy} P(a,b|x,y) \stackrel{\text{\tiny $\L$}}{\le} 2,
\end{equation}
up to the limit of 2$\sqrt{2}$ allowed by quantum theory. $\vecPQ^\text{\tiny CHSH}$ can be realized by both parties locally measuring  $\cos\tfrac{3\pi}{8}\sigma_z+\sin\tfrac{3\pi}{8}\sigma_x$ and $\cos\tfrac{7\pi}{8}\sigma_z+\sin\tfrac{7\pi}{8}\sigma_x$ for, respectively, input 0 and 1 on the shared state $\ket{\Psi^+}=\tfrac{1}{\sqrt{2}}(\ket{01}+\ket{10})$.

Next, we consider $\vecPQ^\text{\tiny 90\%CHSH}$, which consists of a mixture of $\vecPQ^\text{\tiny CHSH}$ with the uniformly random distribution $\vecP_\one=\tfrac{1}{4}$. This allows us to gain some insight on how the regularization methods  fare for noisy quantum distributions, which are more readily accessible in the laboratory. Explicitly, the entries of this distribution are:
\begin{equation}
	P_\Q^\text{\tiny 90\%CHSH}(a,b|x,y)=\tfrac{1}{4}+\tfrac{9}{10}(-1)^{a+b+xy}\tfrac{\sqrt{2}}{8},
\end{equation}
which can be realized by performing the measurements mentioned above on the mixed state $\rho=\tfrac{9}{10}\proj{\Psi^+}+\tfrac{1}{10}\tfrac{\one}{4}$ where $\tfrac{\one}{4}$ is the maximally mixed two-qubit state.

For the purpose of device-independent property estimations, it is known~\cite{Moroder2013} that $\AQ_1$ {\em generally does not} provide a tight estimate of, e.g., the amount of negativity~\cite{Vidal2002} present~ in the system. An example of this is given by the quantum distribution $\vecPQ=\vecPQ^{\tau_{1.25}}$ which arises from both parties locally measuring the state $\ket{\psi}\simeq0.91\,\ket{00}+0.42\,\ket{11}$ in the basis of $\vec{n}_0\cdot\vec{\sigma}$ and $\vec{n}_1\cdot\vec{\sigma}$, where $\vec{n}_0\simeq(0.26, -0.97)$ and $\vec{n}_1\simeq-(0.87,0,0.49)$ are Bloch vectors. Explicitly, using the matrix representation given just below Eq.~\eqref{Eq:vecf:eg}, $\vecPQ^{\tau_{1.25}}$ reads as:
\begin{equation}\label{Eq:P}
\vecPQ^{\tau_{1.25}}=\left[\begin{array}{cc|cc}
\alpha_{00}  & \beta_{00} &   \alpha_{01}  & \beta_{01} \\
\gamma_{00}  & \epsilon_{00} &   \gamma_{01}  & \epsilon_{01} \\\hline
\alpha_{10}  & \beta_{10} &   \alpha_{11}  & \beta_{11} \\
\gamma_{10}  & \epsilon_{10} &   \gamma_{11}  & \epsilon_{11} \\
\end{array}\right]\,,
\end{equation}
where $\alpha_{00}\simeq0.00$, $\beta_{00}=\gamma_{00}\simeq0.01$, $\alpha_{01}=\alpha_{10}\simeq 0.00$, $\beta_{01}=\gamma_{10}\simeq 0.01$, $\beta_{10}=\gamma_{01}\simeq0.21$, $\alpha_{11}\simeq 0.05$, $\beta_{11}=\gamma_{11}\simeq 0.16$ and $\epsilon_{xy}=1-\alpha_{xy}-\beta_{xy}-\gamma_{xy}$ for all $x,y\in\{0,1\}$. Note that  $\vecPQ^{\tau_{1.25}}$ can be used to demonstrate {\em more nonlocality with less entanglement}~\cite{Junge2011,Liang2011,Vidick2011}, as was achieved in~\cite{Christensen2015}. In particular, being on the boundary of $\Q$, $\vecPQ^{\tau_{1.25}}$ {\em maximally} violates the $\tau=\tfrac{5}{4}$ version of the Bell inequality from~\cite{Liang2011}:
\begin{equation}\label{Ineq:Itau}
\begin{split}
	\I_{\tau}:=&\sum_{x,y} (-1)^{xy}P(0,0|x,y)\\
			-&\tau \sum_{a} \left[P(a,0|1,0)+P(0,a|0,1)\right]\stackrel{\L}{\le} 0
\end{split}
\end{equation}
which is provably~\cite{Christensen2015} satisfied by all finite-dimensional maximally entangled states whenever $\tfrac{1}{\sqrt{2}}+\tfrac{1}{2}\le \tau\le \tfrac{3}{2}$.

Finally, we also consider the correlation $\vecPQ^\text{\tiny MDL}$ of~\cite{Putz:NJP}: 
\begin{align}
	P_\Q^\text{\tiny MDL}(a,b|x,y)=&\tfrac{1}{12}(8ab+1)\delta_{x,0}\delta_{y,0} +\tfrac{1}{3}(1-\delta_{a,0}\delta_{b,0})\delta_{xy,1}\nonumber\\
				+&\tfrac{1}{6}(3ab+1)(1-\delta_{a,x}\delta_{b,y})\delta_{x\oplus y,1}.
\end{align}
which can be realized with both parties locally measuring $\sigma_x$  and $\sigma_z$ for, respectively, input 0 and 1 on the shared state $\ket{\Psi}=\tfrac{1}{\sqrt{3}}(\ket{01}+\ket{10}-\ket{11})$. $\vecP_\Q^\text{\tiny MDL}$ can be used to demonstrate the Hardy paradox~\cite{Hardy1993}, as well as a violation of measurement-dependent locality~\cite{Puetz2014} (MDL), e.g., via the following MDL inequality:
\begin{equation}\label{Ineq:MDL}
\begin{split}
	\I_\text{\tiny MDL} = &l P(0,0,0,0) - h \big[P(0,1,0,1)\\
	&+ P(1,0,1,0) + P(1,1,1,1)\big] \stackrel{\text{\tiny MDL}}{\le} 0,
\end{split}	
\end{equation}
where $h>l>0$. Note also that as opposed to $\vecPQ^\text{\tiny CHSH}$ which lies on the boundary of $\Q$, but strictly inside $\N$,  $\vecPQ^\text{\tiny MDL}$ lies on the boundary of both $\Q$ and $\N$. Experimental realizations of a correlation analogous to $\vecP_\Q^\text{\tiny MDL}$ have been achieved in~\cite{Aktas2015,Putz2016}.

\setcounter{figure}{2}

\subsection{Rate of convergence to the true distribution}
\label{App:Convergence}

\begin{figure*}
	\scalebox{0.24}{\includegraphics{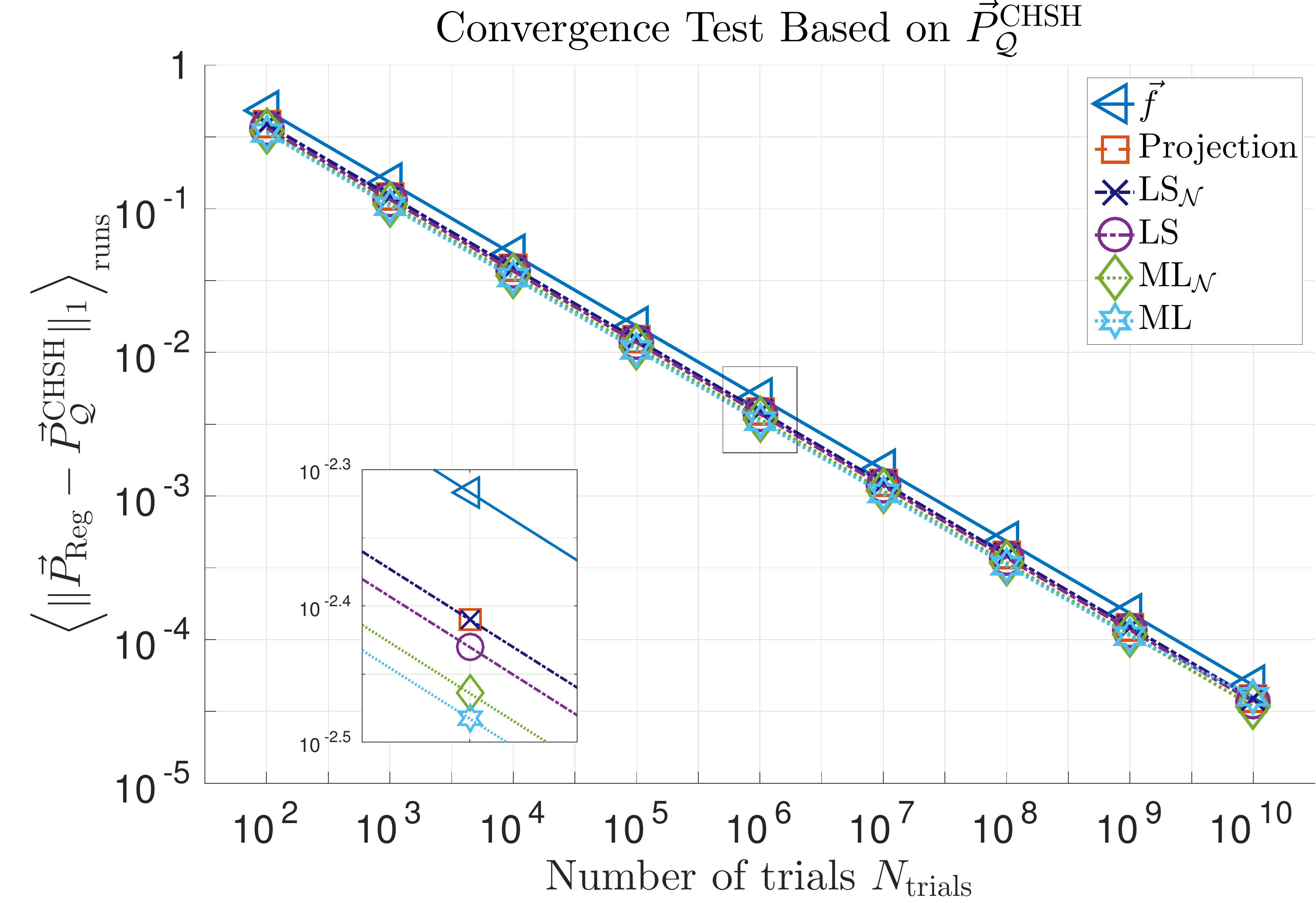}}\hspace{0.5cm}
	\scalebox{0.24}{\includegraphics{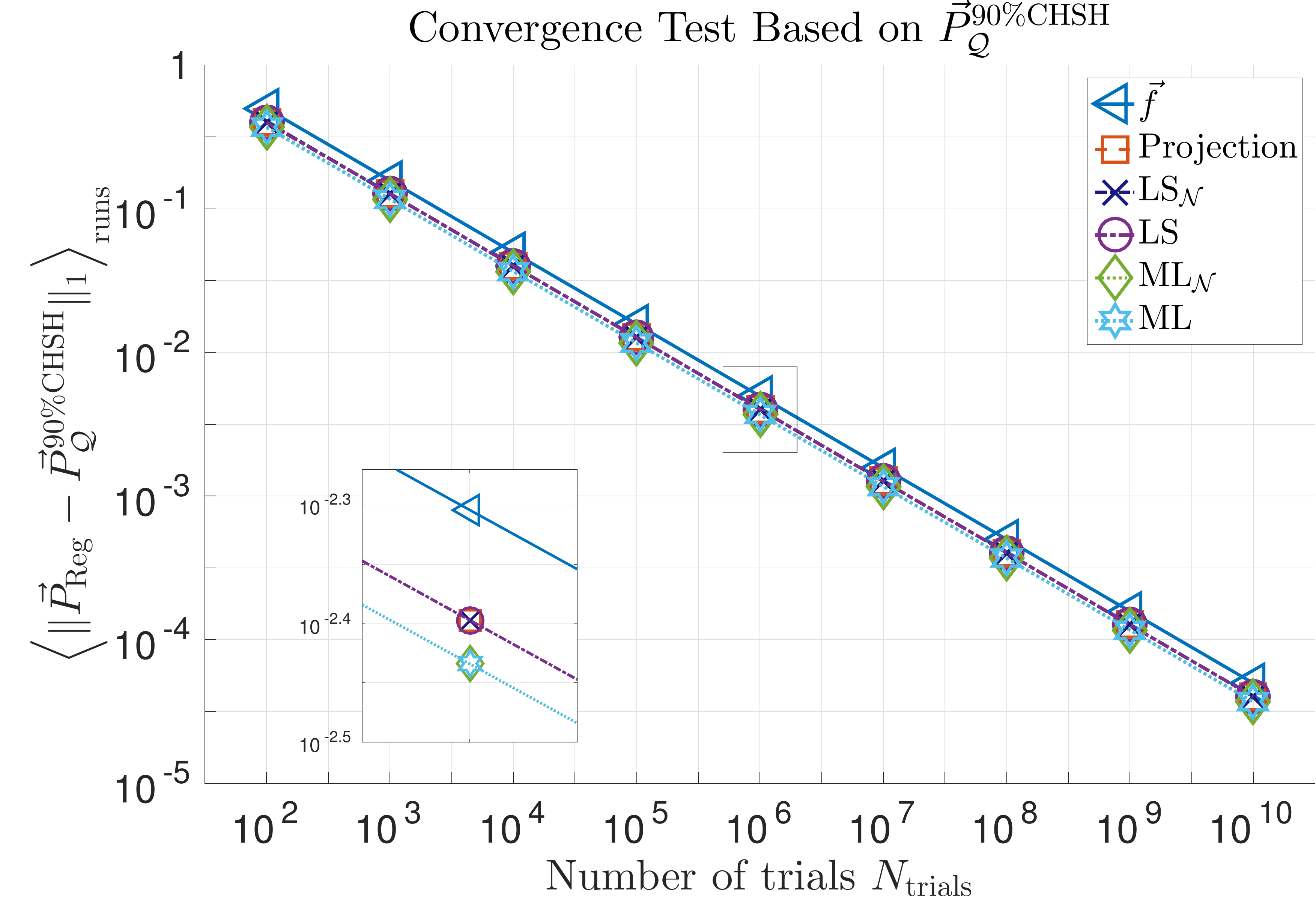}}\vspace{0.1cm}
	\vspace{0.1cm}
	\scalebox{0.24}{\includegraphics{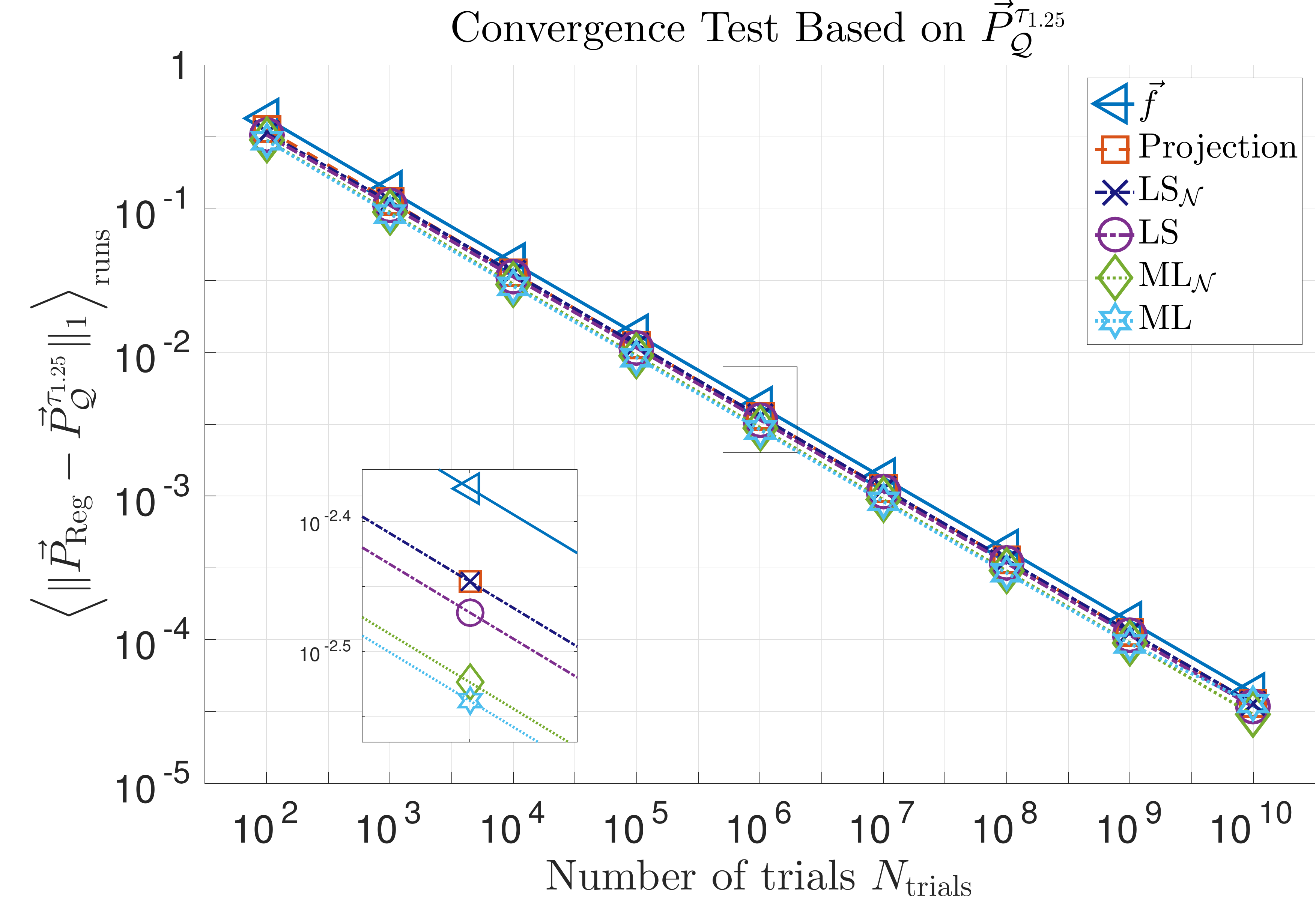}}\hspace{0.5cm}
	\scalebox{0.24}{\includegraphics{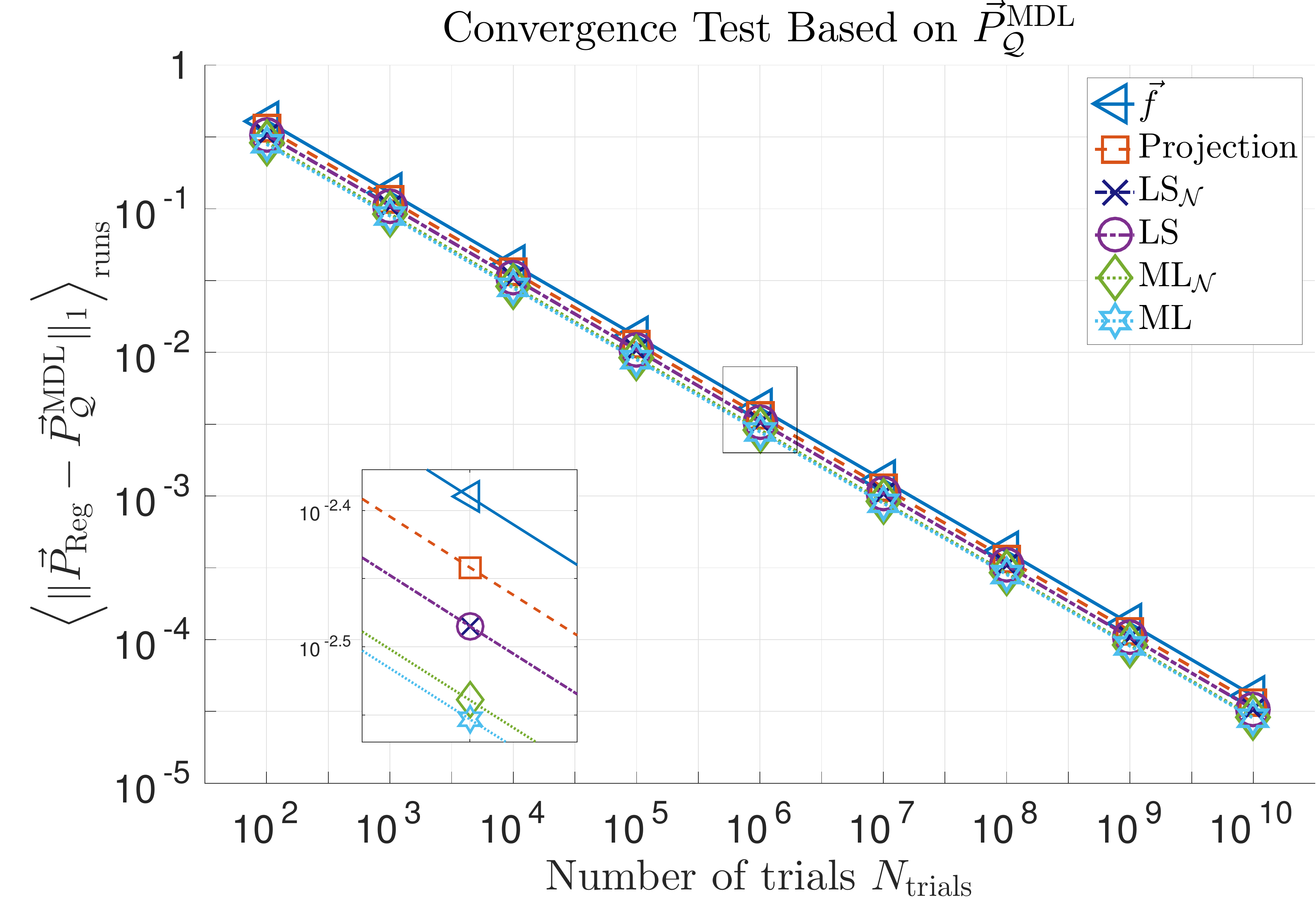}}
	\caption{\label{Fig:Convergence} (Color) Plots of the mean value of the 1-norm deviation  ${||\vecPreg(\vecf)-\vecPQ||_1}$, average over $10^4$ runs,  based on various $\vecPQ$ as a function of the number of trials $\Nt$ for the projection method (Projection), the modified least-square (LS) method with the nonsignaling polytope as the target set (LS$_\NSset$), the least-square method (LS), the modified ML method with a minimization to the nonsignaling polytope (ML$_\NSset$) and the ML method (ML). For details on the LS$_\NSset$ and the ML$_\NSset$ method, see Appendix~\ref{App:Other}. From top (left to right) to bottom (left to right), we have the plots based on $\vecPQ^\text{\tiny CHSH}$, the plots based on $\vecPQ^\text{\tiny 90\%CHSH}$, the plots based on $\vecPQ^{\tau_{1.25}}$, and the plots based on $\vecPQ^\text{\tiny MDL}$ (see Appendix~\ref{App:PQ} for details about these quantum distributions). In each subfigure, the corresponding inset shows a zoom-in view of the plots for $\Nt=10^6$.  }
\end{figure*}

We provide in Fig.~\ref{Fig:Convergence} the plots of the mean value of the 1-norm deviation ${||\vecPreg(\vecf)-\vecPQ||_1}$, between the regularized distribution $\vecPreg(\vecf)$ and the various $\vecPQ$ discussed above as a function of the number of trials $\Nt=10^2,10^3,\ldots,10^{10}$ for the regularization methods discussed in the main text and two additional regularization methods discussed in Appendix~\ref{App:Other}. For ease of comparison, we also include in each of these figures the corresponding plot for $\vecf$.

Notice that from some basic numerical fitting, one finds that for all these methods, the mean value of ${||\vecPreg(\vecf)-\vecPQ||_1}$, as with the mean value of ${||\vecf-\vecPQ||_1}$ diminishes at a rate of  $\tfrac{1}{\sqrt{\Nt}}$. In addition, since the 1-norm upper bounds all other $p$-norms with $p$ being an integer greater than or equal to 2, our results of 1-norm deviation also upper bound the deviation when measured in terms of other $p$ norms.

\subsection{Bias and mean squared errors of estimates}
\label{App:Bias}

\begin{figure*}
 	\scalebox{0.24}{\includegraphics{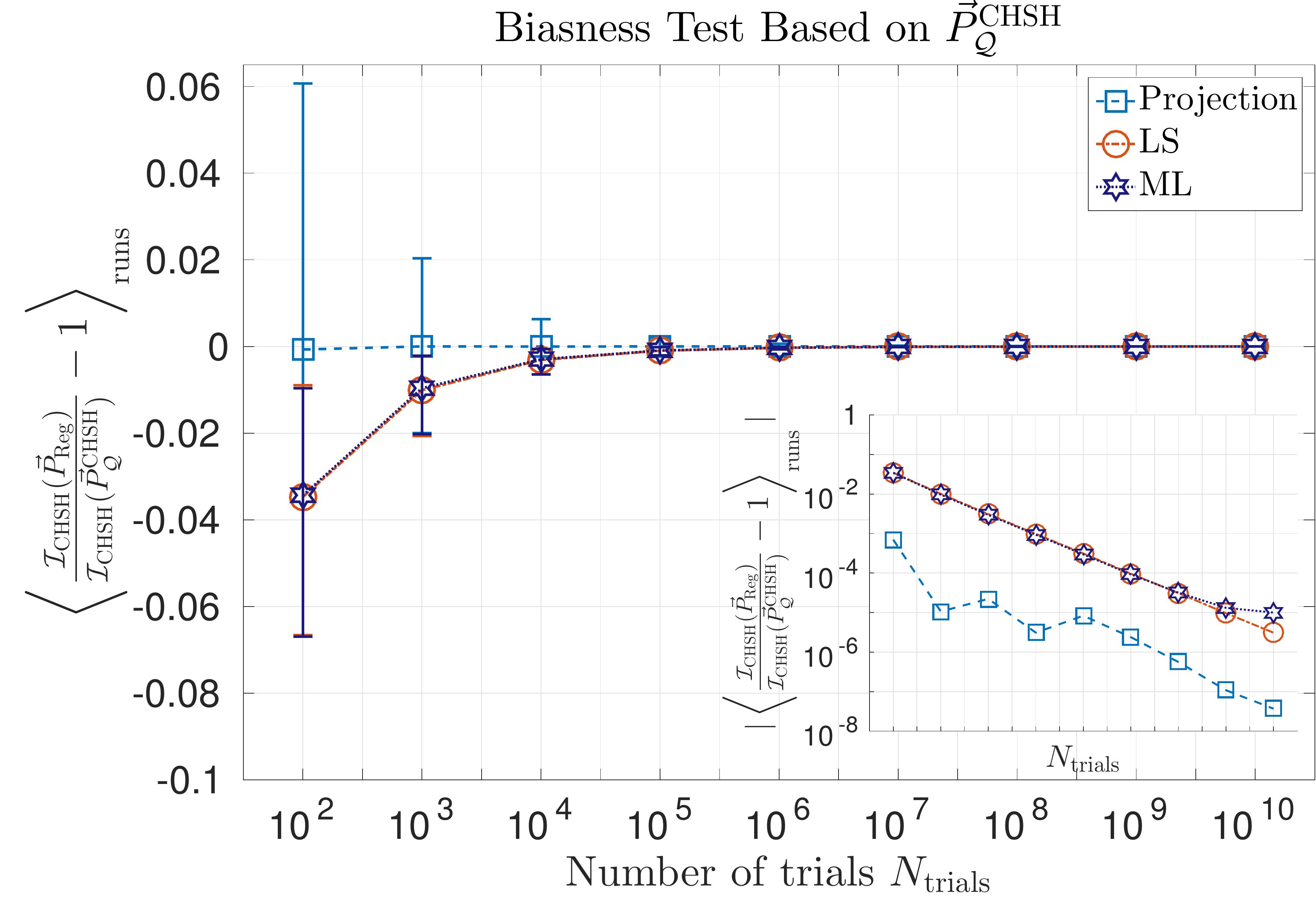}}\hspace{0.1cm}
	\scalebox{0.24}{\includegraphics{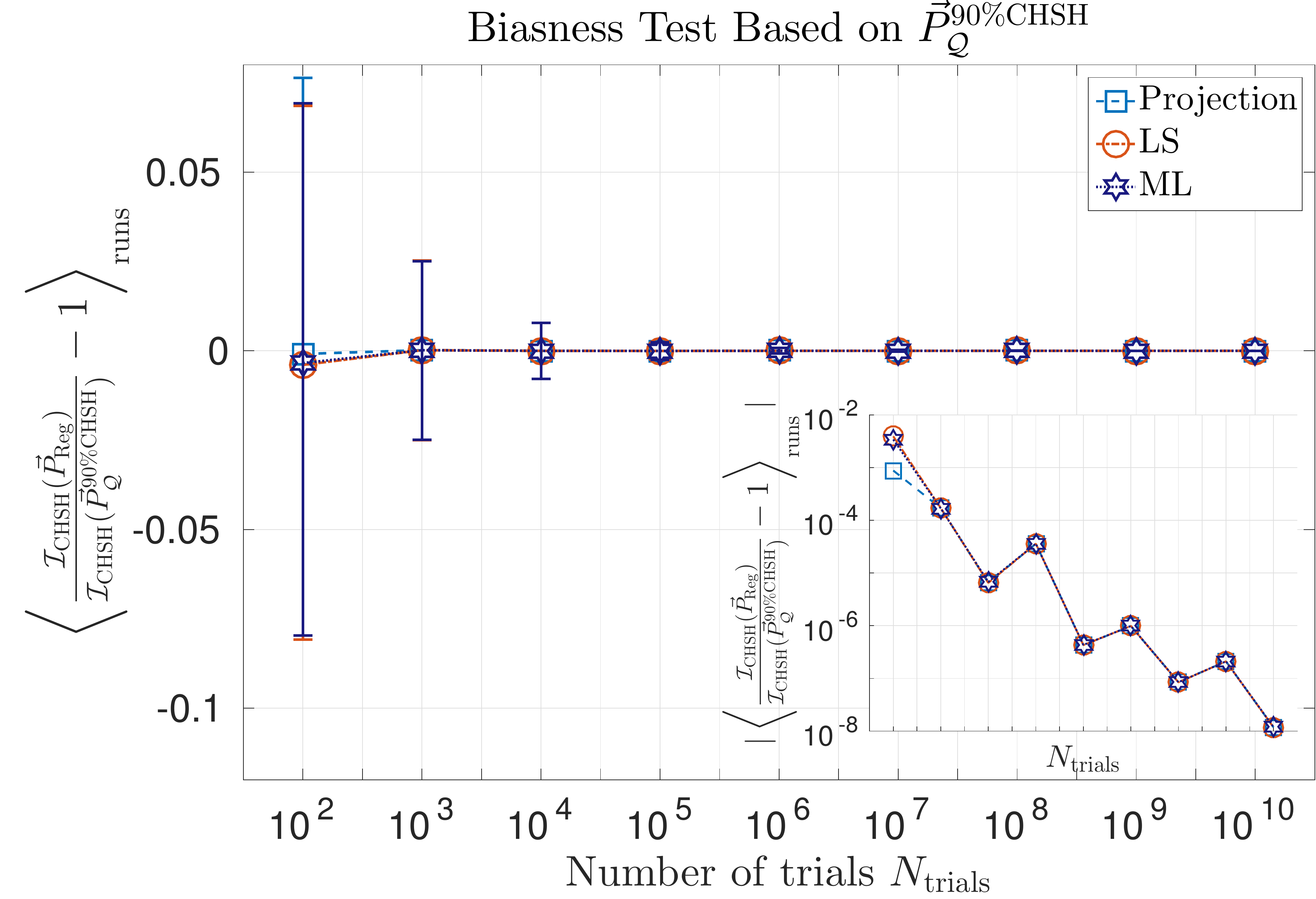}}\vspace{0.2cm}
	\scalebox{0.24}{\includegraphics{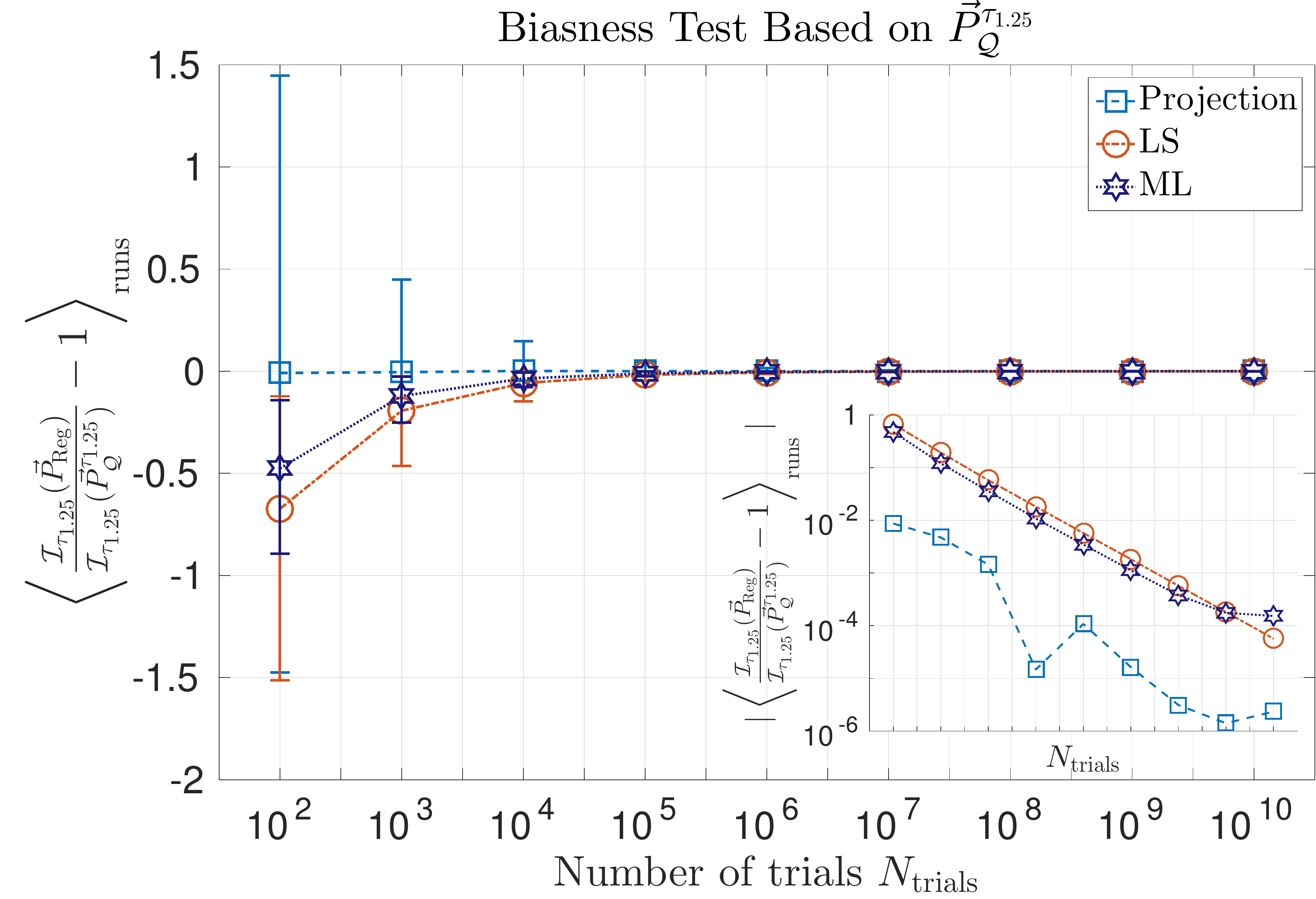}}
	\scalebox{0.24}{\includegraphics{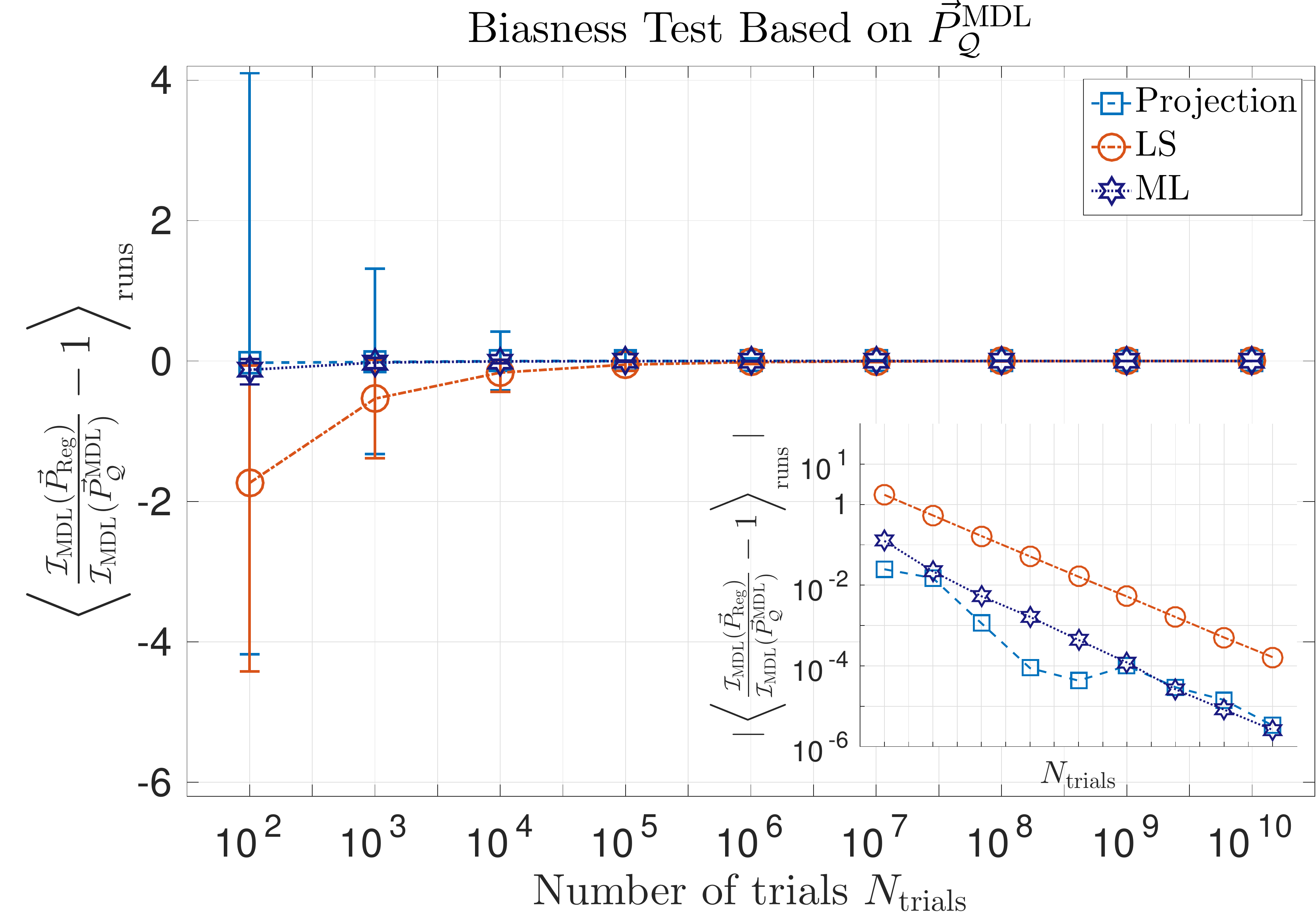}}
	\caption{\label{Fig:Bias} (Color) Plots of the mean value of the normalized Bell violations computed from $\vecPreg(\vecf)$ over $10^4$ relative frequencies $\vecf$ generated from various $\vecPQ$ as a function of the number of trials $\Nt$. From top (left to right) to bottom (left to right), we have  the plots based on $\vecPQ^\text{\tiny CHSH}$ and $\vecPQ^\text{90\%CHSH}$ in conjunction with the CHSH Bell inequality [Eq.~\eqref{Ineq:CHSH}], the plots based on $\vecPQ^{\tau_{1.25}}$ and the Bell inequality $\I_\tau$ with $\tau=1.25$ [Eq.~\eqref{Ineq:Itau}], as well as the plots based on $\vecPQ^\text{\tiny MDL}$ and the MDL inequality [Eq.~\eqref{Ineq:MDL}] with $P(x,y)=\tfrac{1}{4}$ for all $x,y$ and $l=0.1$, $h=1-3l=0.3$. For  details about the quantum distributions considered, see Appendix~\ref{App:PQ}. The lower and upper limit of each error bar mark, respectively, the 10\% and 90\% window of the spread of the values plotted. In each subfigure, the inset shows the corresponding log-log plot of the absolute value of the mean value.  }
\end{figure*}

\begin{figure*}
	\scalebox{0.24}{\includegraphics{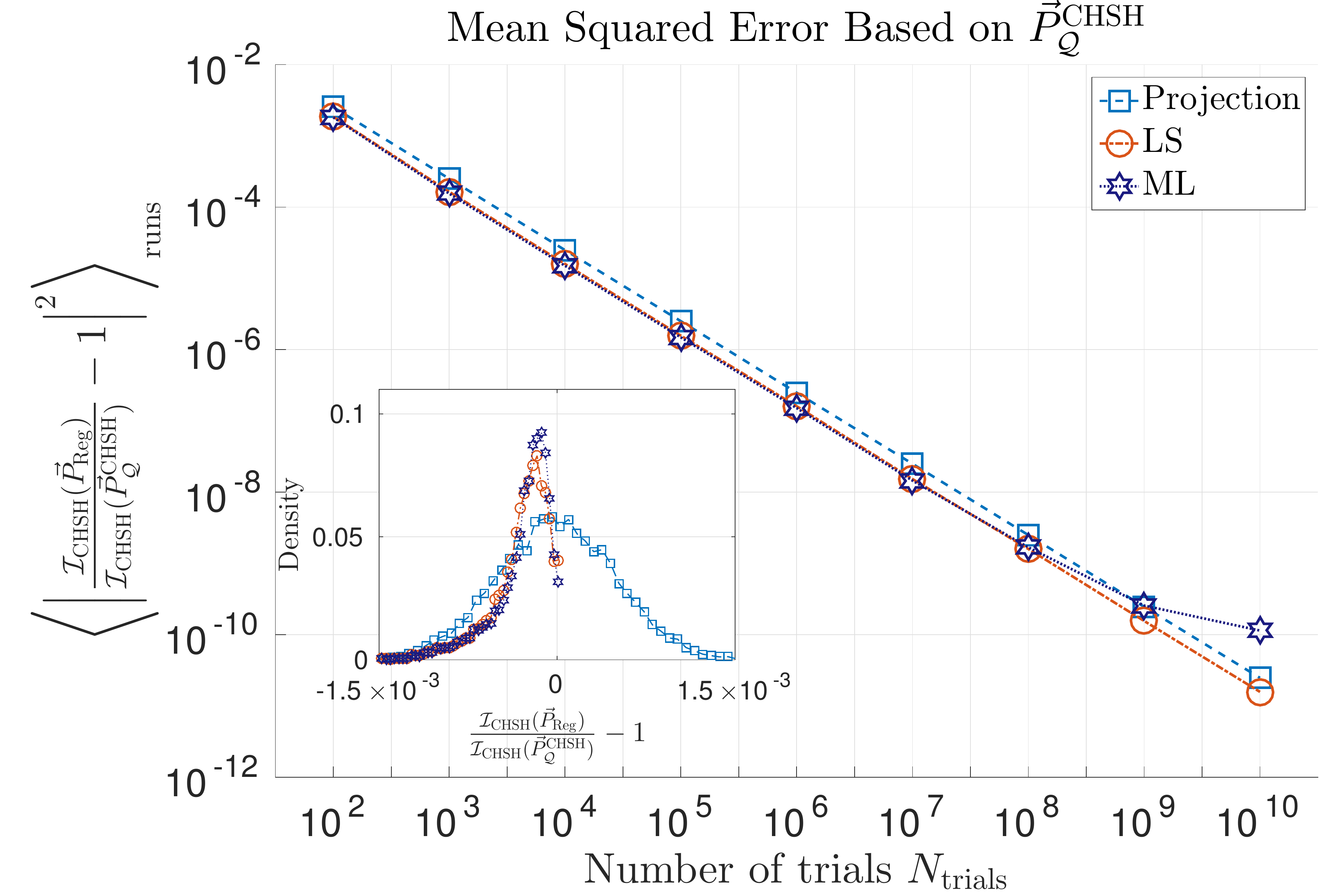}}
	\scalebox{0.24}{\includegraphics{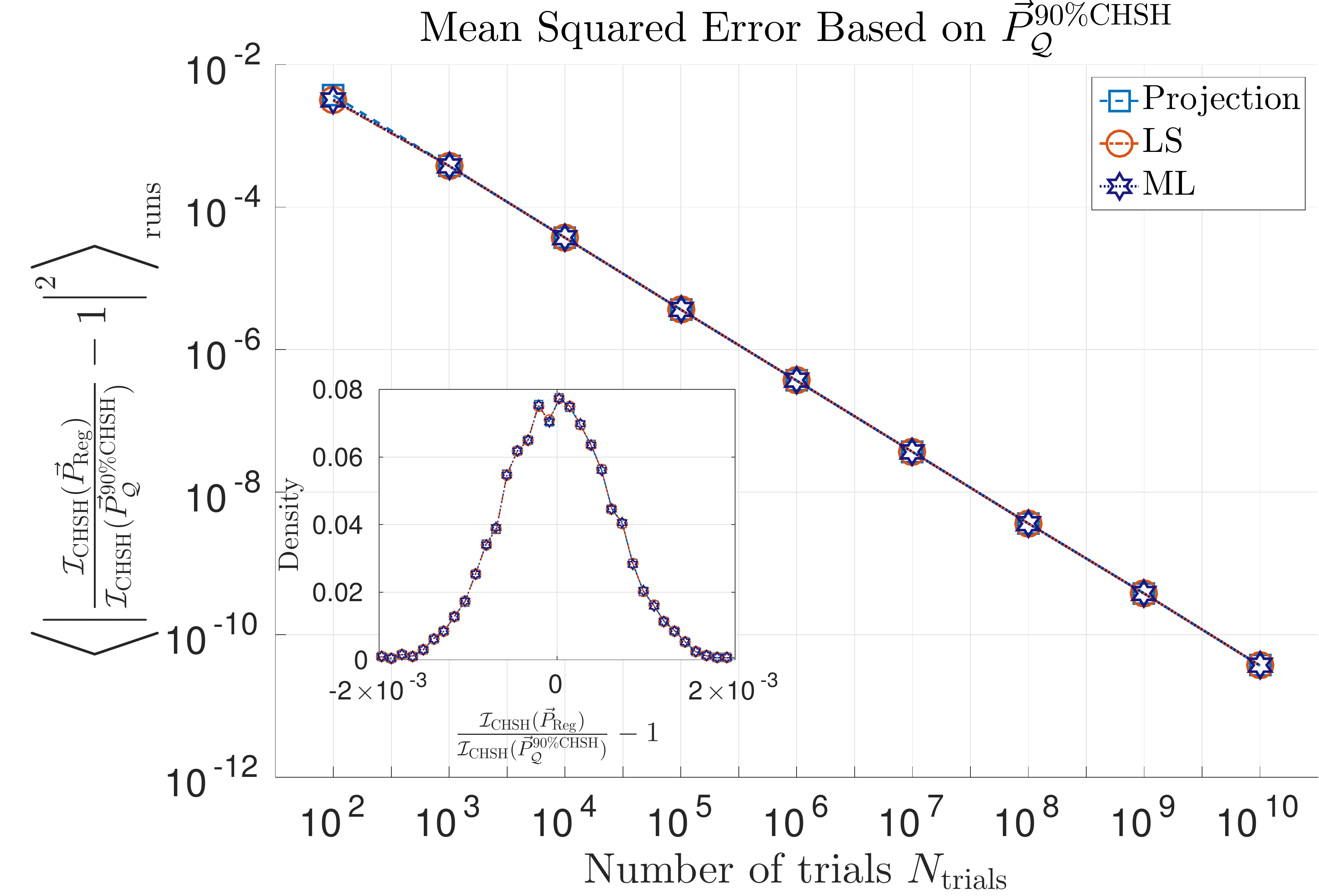}}\vspace{0.2cm}
	\scalebox{0.24}{\includegraphics{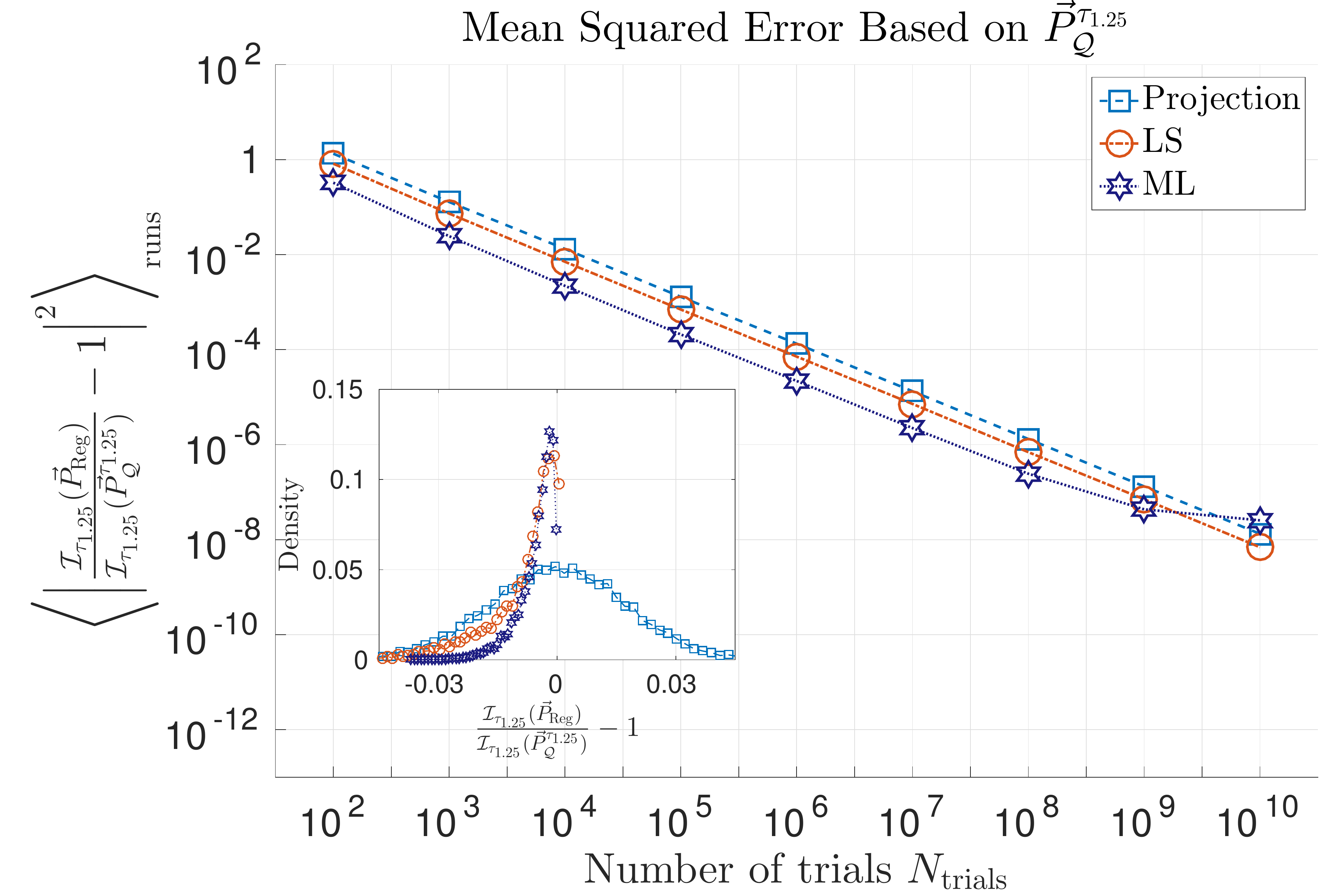}}
	\scalebox{0.24}{\includegraphics{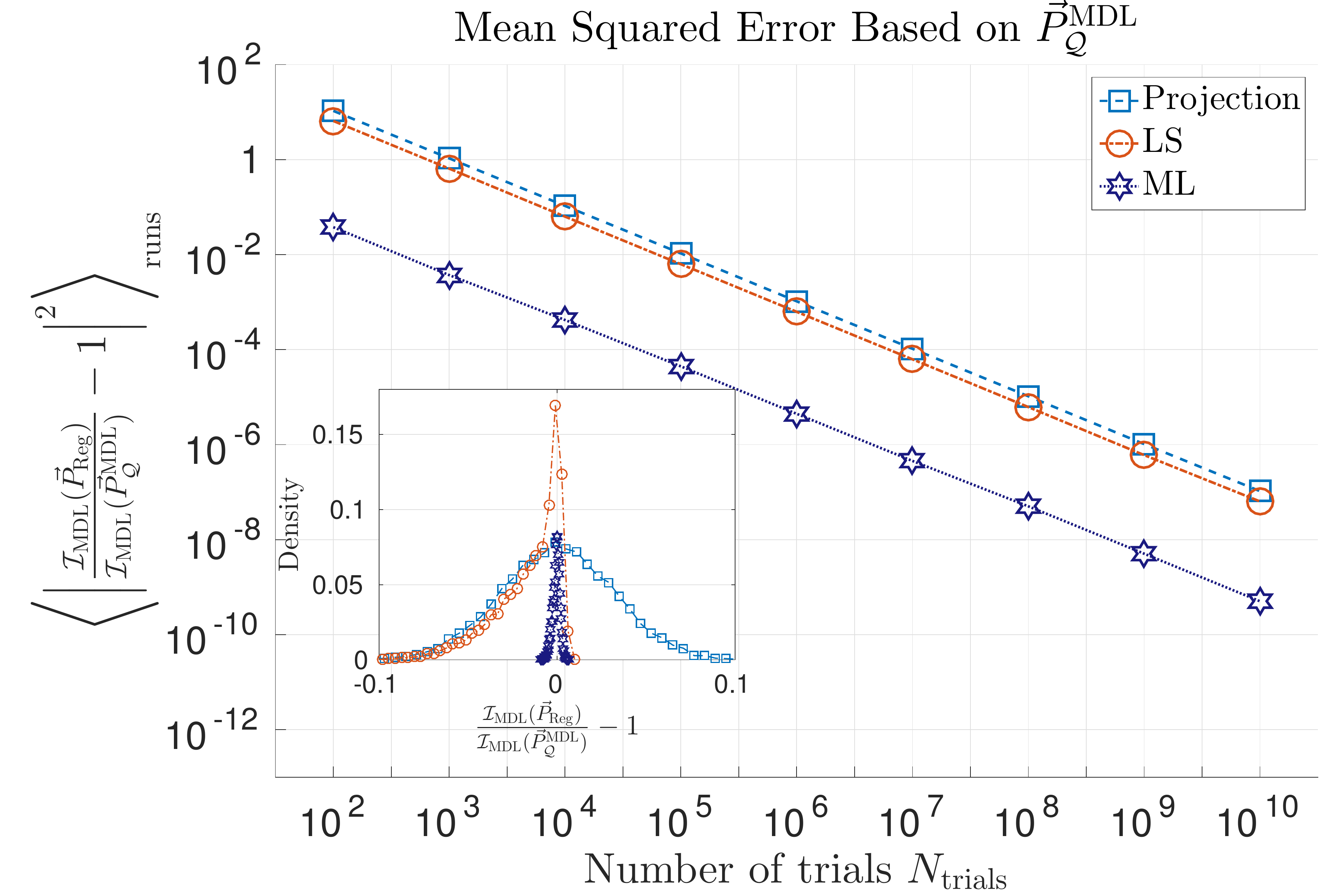}}
	\caption{\label{Fig:MSE} (Color) Plots of the mean squared error of the Bell violations corresponding to those described in Fig.~\ref{Fig:Bias}  over $10^4$ relative frequencies $\vecf$ generated from various $\vecPQ$ as a function of the number of trials $\Nt$. From top (left to right) to bottom (left to right), we have the plots based on $\vecPQ^\text{\tiny CHSH}$ and $\vecPQ^\text{90\%CHSH}$ in conjunction with the CHSH Bell inequality [Eq.~\eqref{Ineq:CHSH}], the plots based on $\vecPQ^{\tau_{1.25}}$ and the Bell inequality $\I_\tau$ with $\tau=1.25$ [Eq.~\eqref{Ineq:Itau}], as well as the plots based on $\vecPQ^\text{\tiny MDL}$ and the MDL inequality [Eq.~\eqref{Ineq:MDL}] with $P(x,y)=\tfrac{1}{4}$ for all $x,y$ and $l=0.1$, $h=1-3l=0.3$. For  details about the quantum distributions considered, see Appendix~\ref{App:PQ}. The lower and upper limit of each error bar mark, respectively, the 10\% and 90\% window of the spread of the values plotted. In each subfigure, the inset shows histograms of the normalized Bell violations for $\Nt=10^6$.  }
\end{figure*}

To gain further insight into the bias and the mean squared errors of various regularization methods, we provide in Fig.~\ref{Fig:Bias} and Fig.~\ref{Fig:MSE}  our simulation results for the mean Bell-inequality violation and the corresponding mean squared error based on the regularized distributions as a function of the number of trials $\Nt=10^2,10^3,\ldots,10^{10}$. Our results clearly suggest that the bias in the Bell value obtained from $\vecPpi$ is essentially negligible,\footnote{As the projection method involves a linear transformation of $\vecf$, its bias is in theory identically zero. The nonzero bias that we observe in this case arises from the fact that our numerical simulations involve only a finite number of samples ($10^4$).}  whereas that obtained from $\vecPnqa$ and $\vecPkl$---for extremal $\vecPQ$--- systematically underestimates (on average) the true value. However, as can be seen from the corresponding insets, such underestimations rapidly shrink with $\Nt$, diminishing at a rate of the order of $\tfrac{1}{\sqrt{\Nt}}$.  For the case of nonextremal $\vecPQ$, such as $\vecPQ^\text{\tiny90\%CHSH}$, we see that the bias present is essentially of the same order as that given by the projection method, which is basically negligible already for small $\Nt$. 

Similarly, as can be seen from Fig.~\ref{Fig:Bias}, the mean squared error rapidly decreases with $\Nt$ at a rate of the order of $\tfrac{1}{\Nt}$ in all the cases investigated. In particular, it is worth noting that for the case of $\vecPQ^\text{\tiny MDL}$, the mean squared error present for the ML method is approximately three orders of magnitude less than all those given by the other methods. This superiority of the ML method over the others is, to some extent, anticipated from the fact that the KL divergence is superior as a statistical distance over, e.g., the total variation distance in terms of discriminating probability distribution that contains zero entries, such as $\vecPQ^\text{\tiny MDL}$ (see page 28 of the preprint version of~\cite{vanDam2005} for a discussion).

In general, the bias nature of the few physical point estimators considered in Fig.~\ref{Fig:Bias} can be rigorously shown. See Appendix~\ref{App:Biased} for a proof.

\section{Proof of certain properties of the estimators}

\subsection{Uniqueness of estimators}
\label{App:Uniqueness}

We give here a proof that the output of certain regularization methods $\vecPreg(\vecf)$ is determined uniquely by $\vecf$. To this end, we first recall from Theorem 8.3 of~\cite{Beck2018} that the minimizer of a strictly convex function over a convex set is unique. To see that the LS method provides a unique estimate, let us first note that the Euclidean norm squared $(\|\vec{x}\|_2)^2$ is strictly convex in $\vec{x}$, since its Hessian  is two times the identity matrix. Moreover, $\min_{\vecP\in\AQ_\ell} (\|\vecf-\vecP\|_2)^2$ and $\min_{\vecP\in\AQ_\ell} \|\vecf-\vecP\|_2$ share exactly the same set of minimizer(s). Thus, the output of the LS method with $\vecPreg\in\C$ for any convex set $\C\subseteq\overline{\NSset}$ is necessarily unique. Likewise, since $-\log_2(x)$ is a strictly convex function of $x$,  the KL divergence from $\vecP\in\C$ to $\vecf$ is also strictly convex. In other words, the output of the regularization via both the LS method and the ML method is unique. In the main text, we have focused on $\C$ being some superset relaxation $\AQ_\ell$ of $\Q$, but this uniqueness clearly applies also to the nonsignaling polytope $\NSset$, thus allowing one to show the uniqueness of the LS and the ML estimators even when the target set is $\NSset$ (see Appendix~\ref{App:Other}).

\subsection{Bias of estimators}
\label{App:Biased}

We give here a proof of the biased nature of the physical estimator provided by the ML method and the LS method. To this end, we will first introduce the following definition for identifying correlations that may give rise to the same set of relative frequencies.
\begin{definition}\label{Dfn:StrictlyNonorth}
  Let $\vecP_1$ and $\vecP_2$ be two correlation vectors. We say that $\vecP_1$ and $\vecP_2$ are strictly nonorthogonal if for all input combinations, there is at least one combination of outcomes where the corresponding probability distributions of both $\vecP_1$ and $\vecP_2$ are nonvanishing. 
\end{definition}
Explicitly, in the bipartite scenario, the nonnegativity of probabilities means that $\vecP_1$ and $\vecP_2$ are strictly nonorthogonal if and only if $\sum_{a,b} P_1(a,b|x,y) P_2(a,b|x,y)>0$ for all $x,y$. The biased nature of $\vecPkl$ and $\vecPnqa$ is then an immediate consequence of Proposition~\ref{Prop:Biased}, which we prove as follows.

\begin{proof}
We will give a proof by contradiction. The steps follow closely those given in the proof of the main proposition of~\cite{Schwemmer2015}. For simplicity, we give the proof in a finite bipartite Bell scenario. The generalization to more complicated but still finite Bell scenarios follows analogously. Suppose a Bell experiment is carried out with $\vecP_{j}$ governing the underlying joint distribution of measurement outcomes. If the experiment involves only a finite number of trials $\Nt$,  and $\vecP_j$ is nondeterministic, one ends up with finite (nonsingleton) possibilities of relative frequencies $\vecf_i$ indexed by $i$. Let $\mathcal{F}_j=\{\vecf_i | q_{\vecP_{j}}(\vecf_i)>0\}$ be the set of all such relative frequencies obtainable from $\vecP_{j}$, where $q_{\vecP_{j}}(\vecf_i)\ge 0$ is the probability of observing the relative frequency $\vecf_i$ given that the underlying distribution is $\vecP_{j}$.

By assumption, one can find two strictly nonorthogonal extremal distributions of $\C$, say, $\vecP_{1}$ and $\vecP_{2}$, with $\vecP_{1}\neq \vecP_2$. Due to statistical fluctuations and the fact that $\vecP_{1}$ and $\vecP_{2}$ are strictly nonorthogonal, the set of relative frequencies simultaneously obtainable by both $\vecP_1$ and $\vecP_2$ is not empty, i.e., $\mathcal{F}_1\bigcap\mathcal{F}_2\neq\{\}$. Now, suppose that a regularization procedure gives an {\em unbiased} estimator. The expected distribution returned by the regularization scheme is then exactly the true distribution:
\begin{equation}
	\mathbb{E}_{\vecP_j}(\vecPreg)=\sum_{\vecf\in\F_j} q_{\vecP_j}(\vecf) \vecPreg(\vecf) = \vecP_j\quad\forall\, \vecP_j\in\C,
\end{equation}
where $\sum_j q_{\vecP_j}(\vecf)=1$. In particular, since $\vecP_1\in\C$, we have
\begin{equation}\label{Eq:Mixture}
	\sum_{\vecf\in\F_1} q_{\vecP_1}(\vecf) \vecPreg(\vecf) = \vecP_1.
\end{equation} 
By assumption, $\vecP_1$ is extremal in $\C$ and $\vecPreg(\vecf)\in\C$ for all $\vecf\in\F_1$. Thus, the unbiased nature of $\vecPreg(\vecf)$, see Eq.~\eqref{Eq:Mixture}, implies that $\vecPreg(\vecf) = \vecP_1$ for all $\vecf\in\F_1$.

Exactly the same line of reasonings can be applied to $\vecP_2$ to give $\vecPreg(\vecf) = \vecP_2$ for all $\vecf\in\F_2$. But as argued above, there exists $\vecf=\vecf_c$ such that $\vecf_c\in\F_1$ and $\vecf_c\in\F_2$. This implies that $\vecP_1=\vecPreg(\vecf_c) = \vecP_2$, which contradicts our assumption that $\vecP_1\neq\vecP_2$. Hence, given the premise that there exist extremal $\vecP_1$, $\vecP_2\in\C$ that are strictly nonorthogonal, the estimator $\vecPreg(\vecf)$, which is constrained to be a member of $\C$, must be biased.

\end{proof}  

To complete the proof that $\vecPkl$ and $\vecPnqa$ are biased, it suffices to choose $\vecP_1=\vecPQ^\CHSH$ and $\vecP_2$ as the deterministic point with $P_2(a,b|x,y)=\delta_{a,x}\delta_{b,y}$. Indeed, both these quantum distributions are known to be extremal in $\C=\Q$ (as well as any of its relaxation $\AQ_\ell$). Evidently, since no two extreme points of the nonsignaling polytope $\NSset$ in the simplest Bell scenario are strictly nonorthogonal, Proposition~\ref{Prop:Biased} cannot be used to show that a regularization method having $\NSset$ as its target is biased. Nonetheless, we conjecture that such estimators (which include all the other unique estimators given in Table~\ref{Tab:MethodProperties}) must also be biased.

\section{Device-independent negativity estimation, optimized witnesses and Bell inequality}
\label{App:Neg}

In the scenarios we studied, the sets $\AQ_1$, $\AQ_2$, ..., $\AQ_4$ correspond to outer approximations of the quantum set $\Q$ using moment matrices of (local) level $\ell = 1, 2, \ldots$ (see~\cite{Moroder2013,Vallins2017} for details). As shown in~\cite{Moroder2013}, the same relaxations provide a way to lower bound the amount of entanglement present in a quantum system. We discuss here a geometrical formulation of their method. Let $\Q^{\le \nu}$ be the set of all correlations $\vecPQ$ obtained from Born's rule using a state $\rho$ of maximal negativity $\nu$. Following~\cite{Moroder2013}, we write $\AQ_\ell^{\le \nu} \supseteq \Q^{\le \nu}$ as the corresponding semidefinite relaxation of level $\ell$. Then, given correlations $\vecPQ$ and approximation level $\ell$, a lower bound on the negativity of $\rho$ can be obtained by computing the supremum $\nu$ such that $\vecPQ \not\in \AQ_\ell^{\le\nu}$.

What are the requirements on the distributions used as input to the negativity estimation algorithm? We first remark that $\AQ_\ell^{\le \nu}$ is a subset of the nonsignaling set, so the negativity estimation algorithm can {\em never} be performed on the relative frequency $\vecf$, but should instead employ one of the regularized distributions $\vecPreg(\vecf)$. Secondly, the set $\AQ_\ell^{\le\nu}$ is a subset of $\AQ_\ell$, so that $\vecf$ should be regularized to a target set $\AQ_{\ell'}$ with $\ell' \ge \ell$. Otherwise, we run the risk of having $\vecPreg \in \AQ_{\ell'} \setminus \AQ_{\ell}$ and the semidefinite program will turn out to be infeasible. In practice, this discrepancy could happen even by regularizing to $\ell' = \ell$ due to insufficient numerical precision, in which case a small amount of white noise can be added to restore feasibility.  For both $\ell=1,2$, this typical fraction of white noise is found to be of the order of 10$^{-8}$ or less.

In our negativity estimation shown in Fig.~\ref{Fig:Negativity:Mean} of the main text, we have employed $\AQ_2$ as our approximation to the quantum set $\Q$. This may seem unnecessary as $\AQ_1$~\cite{Navascues2015} is known to be a pretty good approximation of the quantum set. However, as in the case of obtaining a negativity bound from the CHSH Bell-inequality violation~\cite{Moroder2013}, one finds that $\AQ_1$ generally does not provide a tight negativity bound. This becomes evident by plotting $\AQ_\ell$ and $\AQ_\ell^{\le \nu}$ around the distribution $\vecPQ^{\tiny \tau_{1.25}}$, as shown  in Fig.~\ref{Fig:hierarchies}. There, we see that although $\vecPQ^{\tiny \tau_{1.25}}$ lies visually on the boundary of both $\AQ_1$ and $\AQ_2$, the approximations $\AQ_\ell^{\le\nu}$ of the bounded negativity sets differ significantly, to the point that, in the considered slice, the approximation level $\ell=1$ is unable to certify any point with a negativity higher than $0.3$.

\begin{figure}[h!]
\includegraphics{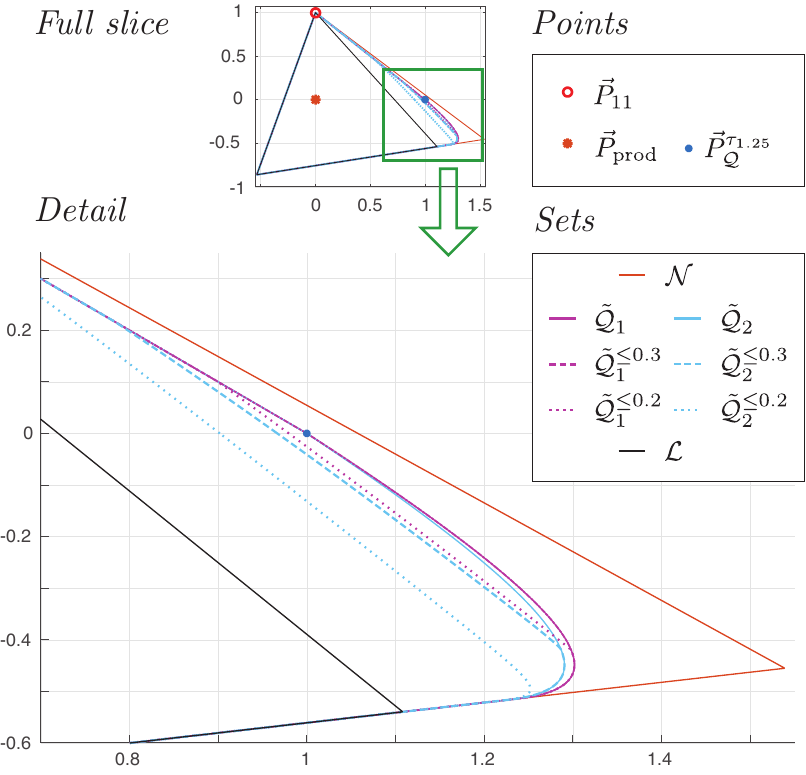}
  \caption{\label{Fig:hierarchies} (Color) A two-dimensional slice in the space of distributions. The point $\vecPQ^{\tiny \tau_{1.25}}$ is the distribution considered in the study of negativity. The deterministic point $\vec{P}_{11}$ corresponds to $P(1,1|x,y) = 1$ for all $x,y$. The product distribution $P_\text{prod}(a,b|x,y) = P_{\mathcal{Q}}^{\tiny \tau_{1.25}}(a|x) P_{\mathcal{Q}}^{\tiny \tau_{1.25}}(b|y)$ is obtained from the marginals of $P_{\mathcal{Q}}^{\tiny \tau_{1.25}}$. In black and red respectively, we have the boundary of the local and the nonsignaling polytopes $\L$ and $\N$. We also plot, for outer approximations of level $\ell = 1$ (blue) and $\ell = 2$ (magenta), the approximated quantum set $\AQ_\ell$, and sets of upper-bounded negativity $\AQ_\ell^{\le \nu}$ for $\nu = 0.2$ and $0.3$.
  }
\end{figure}

Let us also clarify here the connection between the optimized device-independent witness obtained by running a DI algorithm and an ordinary Bell inequality. By solving an SDP analogous to that given in~\cite{Moroder2013} with a regularized distribution $\vecPreg(\vecf)$, the dual of the SDP gives an optimized device-independent witness (a Bell-like inequality) of the form:\footnote{Extracting this witness follows a procedure analogous to that given in Appendix A of~\cite{Schwarz2016}.}
\begin{equation}\label{Eq:OptimizedNegW}
	\sum_{a,b,x,y} \beta^{xy}_{ab} P(a,b|x,y)\stackrel{\text{N}(\rho)\le \alpha}{\le } \S_\alpha,
\end{equation}
where $\alpha\ge 0$ and $\S_\alpha$ are some {\em fixed} numbers, and $\sum_{a,b,x,y} \beta^{xy}_{ab} P_\text{\tiny Reg}(a,b|x,y)> \S_\alpha$. 

Inequality~\eqref{Eq:OptimizedNegW} is an (optimized) device-independent negativity witness in the sense that for any (nonsignaling) $\vecP$, if it gives rise to the left-hand-side of Eq.~\eqref{Eq:OptimizedNegW} a value that is within the limit allowed by quantum theory and is greater than $\S_\alpha$, then whatever quantum state gives rise to $\vecP$ must have a negativity greater than $\alpha$. Thus, assuming that $\vecPreg(\vecf)$ is a reasonable estimate of the underlying distribution, we see that it gives a negativity estimate of the underlying state that is at least $\alpha$.

Alternatively, ~\cite{Moroder2013} provides a way to estimate negativity, in a device-independent manner, by starting from a given Bell-inequality violation: given any vector of coefficients $\beta_{ab}^{xy}$, one can compute, for each value $\S = \sum_{abxy} \beta_{ab}^{xy}P(a,b|x,y)$ a corresponding lower bound $\alpha$ on  the negativity. For example, when one uses the coefficients of the CHSH inequality as $\beta_{ab}^{xy}$, we obtain the negativity bound mentioned in the main text: N$(\rho)\ge \tfrac{\S_\CHSH-2}{4\sqrt{2}-4}$ for $\S_\CHSH\in[2,2\sqrt{2}]$, which gives a monotonous relation between $\alpha$ and $\S$. Note that only when $\S_\CHSH$ is greater than the local bound 2 one can hope to obtain a nonzero lower bound on negativity. Also, $\S_\CHSH> 2\sqrt{2}$ is not quantum realizable and thus cannot be used for the device-independent estimation of negativity.

In the same spirit, any vector of coefficients $\beta_{ab}^{xy}$ can be converted to a family of witnesses. In this process, one can even reuse the optimized Bell inequality given in ~\eqref{Eq:OptimizedNegW}, providing a way to interpret values of $\S \ge \S_\alpha$ for which the coefficients $\beta_{ab}^{xy}$ were originally computed.

\section{Some other plausible regularization methods and their properties}
\label{App:Other}

Here, we  discuss a few other plausible regularization methods with a target set $\C\in\{\Q,\NSset\}$. Properties of these methods are summarized in Table~\ref{Tab:MethodProperties}.

\begin{table*}
\centering
\begin{tabular}{|c|c|c|c|c|}
\hline
Method            & Output set $\mathcal{R}$                       & Optimization & Unique? & Unbiased?   \\ \hline\hline
Projection        &  \NSspace                           & -    & \checkmark    & \checkmark               \\ \hline
NNA$_1$      & \multirow{3}{*}{\NSset}            & LP    & \xmark      & -               \\ 
LS$_\NSset$      &                                                & SOCP  & \checkmark    & \xmark                \\ 
NNA$_\infty$ &                                                & LP    & \xmark      & -                \\ \hline\hline
NQA$_1$      & \multirow{2}{*}{$\Qset$} & SDP   & \xmark      & -              \\
LS      & \multirow{2}{*}{(approximated by $\AQ_\ell \supseteq \Qset$)} & SDP   & $\checkmark$    & \xmark               \\
NQA$_\infty$ &  & SDP   & \xmark      & -                \\ \hline\hline
ML$_\NSset$           &  \NSset                             & CP    & $\checkmark$      & \xmark                \\ \hline
\multirow{2}{*}{ML}           & $\Qset$                              & \multirow{2}{*}{CP}    & \multirow{2}{*}{$\checkmark$}      & \multirow{2}{*}{\xmark}               \\ 
          &  (approximated by $\AQ_\ell \supseteq \Qset$)                             &     &       &                \\ \hline\hline
\end{tabular}
\caption{\label{Tab:MethodProperties}A summary of the various regularization methods discussed  and some of their key properties. Note that all these considered methods are invariant under relabelings, in the sense that for any relabeling operation $M$, we have $M\vecPreg(\vecf)=\vecPreg(M\vecf)$. In the event that the output of the regularization method is unique, we indicate in the last column if the method provides (in general) unbiased estimation of Bell violation.}
\end{table*}

\subsection{Nearest quantum or nonsignaling approximation via $p$-norms}

Obviously, one can  regularize a given $\vecf$ to $\Q$ (resp. $\N$) by determining the nearest quantum (resp. nonsignaling) approximation NQA (resp. NNA) of $\vecf$ to $\Q$ (resp. $\N$) with a metric induced by any of the $p$-norms:
\begin{subequations}
\begin{gather}\label{Eq:NQAp}
	\vecP_\text{\tiny NQA$_p$}(\vecf)=\argmin_{\vecP\in\Q} ||\vecf-\vecP||_p,\\
	\vecP_\text{\tiny NNA$_p$}(\vecf)=\argmin_{\vecP\in\NSset} ||\vecf-\vecP||_p.	\label{Eq:NNAp}
\end{gather}
\end{subequations}
When $p=2$,  NQA$_2$ gives the LS method described in Appendix~\ref{App:NQA}. By the same token, we shall refer to NNA$_2$ of Eq.~\eqref{Eq:NNAp} as LS$_\N$. We can rewrite it as the following second-order cone program (SOCP)~\cite{Boyd2004Book}:
\begin{equation}\label{Eq:NNA_2}
\begin{split}
	&\min \qquad s\\
	&\text{s.t.}\quad \| \vecf-\vecP \|_2\le s\\
	&\qquad\  d_i\le \vec{c}_i\cdot\vecP\quad\forall\quad i=1,2,\ldots,m
\end{split}
\end{equation}
where inequalities in the last line are positivity constraints used to define the nonsignaling polytope. Hence, the LS$_\N$ regularization can be also efficiently computed using an SOCP solver. Note that the inequality constraints in the last line of Eq.~\eqref{Eq:NNA_2} can be replaced by imposing the nonsignaling constraints of Eq.~\eqref{Eq:NS}. The regularization method of LS$_\N$ is then evidently a least-square minimization problem with linear equality constraints. This regularization method has previously been implemented in~\cite{Cavalcanti:2015aa} as part of their data analysis.

For the case of $p=1$ (previously considered in~\cite{Schwarz2016}) or $p=\infty$, the optimization problem of Eqs.~\eqref{Eq:NQAp} and ~\eqref{Eq:NNAp} can be cast, respectively, as a semidefinite program and a linear program. For both values of $p$, one can find easily an example of $\vecf$ where some $\vecP_\text{\tiny NQA$_p$}(\vecf)$ [$\vecP_\text{\tiny NNA$_p$}(\vecf)$] is Bell-inequality violating but some other is not. For explicit examples, see the Supplemental Material~\cite{MatFile}. Similarly, the variant of NNA$_1$ employed in~\cite{Bernhard2014} is known to give nonunique estimators~\cite{Schwarz:Private}.

\subsection{Minimizing the KL divergence to $\NSset$}

As with the LS$_\N$ method, one can also consider minimizing the KL divergence from the nonsignaling polytope $\NSset$ to some given relative frequency $\vecf$ 
\begin{equation}\label{Eq:KL-NP}
	\min_{\vecP\in\NSset}\,\, \sum_{a, b, x, y} f(x,y) f(a,b|x,y) \log_2 \left[ \frac{f(a,b|x,y)}{P(a,b|x,y)} \right].
\end{equation}
We shall refer to the corresponding regularization method as the ML$_\NSset$ method. Note that, as with the ML method, performing the ML$_\NSset$ regularization method amounts to solving a conic program, which can be achieved using, e.g., the SCS solver.

Although not explicitly discussed as a regularization method, the ML$_\NSset$ method has been employed in~\cite{Zhang2011} and was noted to help in the analysis of the hypothesis testing of local causality, as discussed in Appendix 2 of~\cite{Zhang2011} [see Eqs. (A1) and (A2) therein]. Conceivably, the ML method may help in the analysis of Bell tests further.

\end{document}